\documentclass[11pt]{amsart}
\usepackage[margin=1in]{geometry}
\usepackage{amsfonts}
\usepackage{amssymb}
\usepackage[dvips]{graphics}
\usepackage{epsfig}
\pagestyle{myheadings}
\usepackage{euscript}
\usepackage{color}

 \newtheorem{thm}{Theorem}[section]
 \newtheorem{cor}[thm]{Corollary}
 \newtheorem{lem}[thm]{Lemma}
 \newtheorem{prop}[thm]{Proposition}
 \theoremstyle{definition}
 
 \theoremstyle{remark}
 \newtheorem{rem}[thm]{Remark}
 \newtheorem*{ex}{Example}
 \numberwithin{equation}{section}
 
 \def\idty{{\mathchoice {\mathrm{1\mskip-4mu l}} {\mathrm{1\mskip-4mu l}} %
{\mathrm{1\mskip-4.5mu l}} {\mathrm{1\mskip-5mu l}}}}

\newcommand{\bR}{{\mathbb R}}

\newcommand{\cA}{{\mathcal A}}
\newcommand{\cB}{{\mathcal B}}
\newcommand{\cH}{{\mathcal H}}
\newcommand{\rd}{{\rm d}}

\newcommand{\cU}{{\mathcal U}}

\newcommand{\be}{\begin{equation}}
\newcommand{\ee}{\end{equation}}

\newcommand{\Rl}{\bR}

\begin{document}

\renewcommand{\thefootnote}{\fnsymbol{footnote}}
\title[On the Existence of the Dynamics]{On the Existence of the Dynamics for Anharmonic \\ Quantum Oscillator Systems}

\author[B. Nachtergaele]{Bruno Nachtergaele}
\address{Department of Mathematics\\
University of California, Davis\\
Davis, CA 95616, USA}
\email{bxn@math.ucdavis.edu}

\author[B. Schlein]{Benjamin Schlein}
\address{Centre for Mathematical Sciences \\
University of Cambridge \\
Cambridge, CB3 0WB, UK}
\email{b.schlein@dpmms.cam.ac.uk}

\author[R. Sims]{Robert Sims}
\address{Department of Mathematics\\
University of Arizona\\
Tucson, AZ 85721, USA}
\email{rsims@math.arizona.edu}

\author[S.Starr]{Shannon Starr}
\address{Department of Mathematics\\
University of Rochester\\
Rochester, NY 14627, USA}
\email{sstarr@math.rochester.edu}

\author[V. Zagrebnov]{Valentin Zagrebnov}
\address{
Universite de la M\'editerran\'ee (Aix-Marseille II)\\
Centre de Physique Th\'eorique- UMR 6207 CNRS, Luminy - Case 907\\
13288 Marseille, Cedex 09, France}
\email{zagrebnov@cpt.univ-mrs.fr}

\date{Version: \today }
\maketitle
\bigskip
\begin{abstract}
We construct a $W^*$-dynamical system describing the dynamics of a class of 
anharmonic quantum oscillator lattice systems in the thermodynamic limit.
Our approach is based on recently proved Lieb-Robinson bounds for such 
systems on finite lattices \cite{nachtergaele2009}.
\end{abstract}

\maketitle

\footnotetext[1]{Copyright \copyright\ 2009 by the authors. This
paper may be reproduced, in its entirety, for non-commercial
purposes.}

\section{Introduction}\label{sec:intro}

The dynamics of a finite quantum system, i.e., one with a finite number
of degrees of freedom described by a Hilbert space $\cH$, is given by the 
Schr\"odinger equation. The Hamiltonian $H$ is a densely 
defined self-adjoint operator on $\cH$, and for a vector $\psi(t)$ in
the domain of $H$  the state at time $t$ satisfies
\be
i\partial_t \psi(t) = H\psi(t) \, .
\label{se}\ee
For all initial conditions $\psi(0)\in\cH$, the unique solution is given by
$$
\psi(t)=e^{-itH}\psi(0), \mbox{ for all } t\in\Rl.
$$
Due to Stone's Theorem $e^{-itH}$ is a strongly continuous one-parameter group of
unitary operators on $\cH$, and the self-adjointness of $H$ is the necessary and sufficient
condition for the existence of a unique continuous solution for all times.

An alternative description of this dynamics is the so-called Heisenberg picture
in which the time evolution is defined on the algebra of observables instead
of the Hilbert space of states. The corresponding Heisenberg equation is
\be
\partial_t A(t)=i[H,A(t)]\, ,
\label{he}\ee
where, for each $t\in \Rl$, $A(t)\in\cB(\cH)$ is a bounded linear operator
on $\cH$. Its solutions are given by a one-parameter group of $*$-automorphisms,
$\tau_t$, of $\cB(\cH)$:
$$
A(t)=\tau_t(A(0)).
$$

For the description of physical systems we expect the Hamiltonian, $H$,
to have some additional properties. E.g., for finite systems such as atoms or
molecules, stability of the system requires that $H$ is bounded from below.
In this case, the infimum of the spectrum is expected to be an eigenvalue and
is called the ground state energy. When the model Hamiltonian, $H$, is describing 
bulk matter rather than finite systems, we expect some additional properties. 
E.g., the stability of matter requires that the ground state energy has a lower bound
proportional to $N$, where $N$ is the number of degree of freedom. Much progress
on this stability property has been made in the last several decades 
\cite{lieb_selecta_stability,lieb-seiringer}.
We also expect that the dynamics of local observables of bulk matter, or large
systems in general, depends only on the local environment. Mathematically this 
is best expressed by the existence of the dynamics in the thermodynamic limit,
i.e., in infinite volume. This is the question we address in this paper.

There are two settings that allow one to prove a rich set of important physical
properties of quantum dynamical systems, including infinite ones: the $C^*$ 
dynamical systems and the $W^*$ dynamical systems \cite{bratteli1987}. In both cases, the 
algebra of observables can be thought of a norm-closed $*$-subalgebra $\cA$ of some algebra
of the form $\cB(\cH)$, but in the case of the $W^*$-dynamical systems we additionally
require that the algebra is closed for the weak operator topology, which makes it a 
von Neumann algebra.  For a $C^*$-dynamical  system the group of automorphisms 
$\tau_t$ is assumed to be strongly continuous, i.e., for all $A\in\cA$, the map 
$t\mapsto \tau_t(A)$ is continuous in $t$ for the operator norm ($C^*-$norm) on $\cA$. 
In a $W^*$-dynamical system the continuity  is with respect to the weak topology.

In the case of lattice systems with a finite-dimensional Hilbert space
of states associated with each lattice sites, such as quantum spin-lattice systems 
and lattice fermions, it has been known for a long time that under rather
general conditions the dynamics can be described by a $C^*$ dynamical
system, including in the thermodynamic limit \cite{bratteli1997}.  When the Hilbert 
space at each site is infinite-dimensonal and the finite-system Hamiltonians are 
unbounded, this is no longer possible and the {\em weak continuity} becomes a natural 
assumption.

The class of systems we will primarily focus on here are lattices of quantum oscillators but
the underlying lattice structure is not essential for our method. Systems defined on 
suitable graphs, such as the systems considered in \cite{eisert2005,eisert2008} can
also be analyzed with the same methods. In a recent preprint \cite{amour2009}, it was shown 
that convergence of the dynamics in the thermodynamic limit can be obtained for a modified 
topology. Here, we follow a somewhat different approach. The main difference is that we study
the thermodynamic limit of anharmonic perturbations of an {\em infinite} harmonic lattice
system described by an explicit $W^*$-dynamical system. The more traditional way
is to first define the dynamics of anharmonic systems in finite volume (which can be done 
by standard means \cite{reed-simon}), and then to study the limit in which the volume 
tends to infinity.  This is what is done in \cite{amour2009}, but it appears that controlling 
the continuity of the limiting dynamics is more straightforward 
in our approach. In fact, we are able to show that the resulting dynamics for the class of 
anharmonic lattices we study is indeed weakly continuous, and we obtain a 
$W^*$-dynamical system for the infinite system. The $W^*$-dynamical setting is 
obtained by considering the GNS representation of a ground state or thermal 
equilibrium state of the harmonic system. The ground states and thermal states 
are quasi-free states in the sense of \cite{robinson1965}, or convex mixtures
of quasi-free states. In the ground state case the GNS representations are the well-known
Fock reprensentations. For the thermal states the GNS representations have been constructed 
by Araki and Woods \cite{araki1963}.

Common to both approaches, ours and the one of \cite{amour2009}, is the crucial role
played by an estimate of the speed of propagation of perturbations in the system, commonly
referred to as Lieb-Robinson bounds \cite{hast2006,lieb1972,nach12006,nach22006,nach2007}.
Briefly, if $A$ and $B$ are two 
observables of a spatially extended system, localized in regions $X$ and $Y$ of our graph,
respectively, and $\tau_t$ denotes the time evolution of the system then, a Lieb-Robinson
bound is an estimate of the form
$$
\Vert [\tau_t (A), B]\Vert\leq C e^{-a(d(X,Y)-v\vert t\vert)}\, ,
$$
where $C, a$, and $v$ are positive constants and $d(X,Y)$ denotes the distance
between $X$ and $Y$. Lieb-Robinson bounds for anharmonic 
lattice systems were recently proved in \cite{nachtergaele2009}, and this work builds on the
results obtained there. Our results are mainly limited to short-range interactions that are either 
bounded or unbounded perturbations of the harmonic interaction (linear springs).

To conclude the introduction, let us mention that the same questions, the existence of the
dynamics for infinite oscillator lattices, can and has been asked for classical systems. Two
classic papers are \cite{lanford1977,marchioro1978}. Many properties of this classical infinite
volume harmonic dynamics have been studied in detail e.g. \cite{Spohn77, vanhem} and 
some recent progress on locality estimates for anharmonic systems is reported in 
\cite{butta2007,raz2009}.

The paper is organized as follows. We begin with a section discussing bounded interactions.
In this case, the existence of the dynamics follows by mimicking the proof valid in the 
context of quantum spins systems. Section 3 describes the infinite volume harmonic dynamics on
general graphs. It is motivated by an explicit example on $\mathbb{Z}^d$. Next, in Section 4, we 
discuss finite volume perturbations of the infinite volume harmonic dynamics and prove that
such systems satisfy a Lieb-Robinson bound. In Section 5 we demonstrate that the existence
of the dynamics and its continuity follow from the Lieb-Robinson estimates established in 
the previous section.

%
%
%
%

\section{Bounded Interactions} \label{sec:bdints}

The goal of this section is to prove the existence of the dynamics for oscillator systems
with bounded interactions. Since oscillator systems with bounded interactions can be treated 
as a special case of more general models with bounded interactions, we will use 
a slightly more general setup in this section, which we now introduce.

We will denote by $\Gamma$ the underlying structure on which our models will be defined.
Here $\Gamma$ will be an arbitrary set of sites equipped with a metric $d$.
For $\Gamma$ with countably infinite cardinality, we will need to assume that there exists a
non-increasing function $F: [0, \infty) \to (0, \infty)$ for which:

\noindent i) $F$ is uniformly integrable over $\Gamma$, i.e.,
\begin{equation} \label{eq:fint}
\| \, F \, \| \, := \, \sup_{x \in \Gamma} \sum_{y \in \Gamma}
F(d(x,y)) \, < \, \infty,
\end{equation}

\noindent and

\vspace{.3cm}

\noindent ii) $F$ satisfies
\begin{equation} \label{eq:intlat}
C \, := \, \sup_{x,y \in \Gamma} \sum_{z \in \Gamma}
\frac{F \left( d(x,z) \right) \, F \left( d(z,y)
\right)}{F \left( d(x,y) \right)} \, < \, \infty.
\end{equation}

Given such a set $\Gamma$ and a function $F$, by the triangle inequality,
for any $a \geq 0$ the function
\begin{equation*}
F_a(x) = e^{-ax} \, F(x),
\end{equation*}
also satisfies i) and ii) above with $\| F_a \| \leq \| F \|$ and $C_a \leq C$.

In typical examples, one has that $\Gamma  \subset \mathbb{Z}^{d}$ for
some integer $d \geq 1$, and the metric is
just given by $d(x,y) = |x -  y|=\sum_{j=1}^{d} |x_j - y_j|$.
In this case, the function $F$ can be
chosen as $F(|x|) = (1 + |x|)^{- d - \epsilon}$ for any $\epsilon >0$.

To each $x \in \Gamma$, we will associate a Hilbert space $\mathcal{H}_x$.
In many relevant systems, one considers
$\mathcal{H}_x = L^2( \mathbb{R}, \rd q_x)$, but this is not essential.
With any finite subset $\Lambda \subset \Gamma$,
the Hilbert space of states over $\Lambda$ is given by
\begin{equation*}
\mathcal{H}_{\Lambda} \, = \, \bigotimes_{x \in \Lambda} \mathcal{H}_x,
\end{equation*}
and the local algebra of observables over $\Lambda$ is then defined to be
\[
\mathcal{A}_{\Lambda} = \bigotimes_{x \in \Lambda} \cB (\cH_x),
\]
where $\cB (\cH_x)$ denotes the algebra of bounded linear operators on $\cH_x$.

If $\Lambda_1 \subset \Lambda_2$, then there is a natural way of identifying
$\mathcal{A}_{\Lambda_1} \subset \mathcal{A}_{\Lambda_2}$, and we may thereby
define the algebra of quasi-local observables by the inductive limit
\begin{equation*}
\mathcal{A}_{\Gamma} \, = \, \bigcup_{\Lambda \subset \Gamma} \mathcal{A}_{\Lambda},
\end{equation*}
where the union is over all finite subsets $\Lambda \subset \Gamma$; see
\cite{bratteli1987,bratteli1997} for a discussion of these issues in general.

The result discussed in this section corresponds to bounded perturbations of
local self-adjoint Hamiltonians. We fix a collection of on-site local operators
$H^{\rm loc} = \{ H_x \}_{x \in \Gamma}$ where each $H_x$ is a self-adjoint
operator over $\mathcal{H}_x$. In addition, we will consider a general class of bounded perturbations.
These are defined in terms of an interaction $\Phi$, which is a map from the
set of subsets of $\Gamma$ to $\mathcal{A}_{\Gamma}$ with the property that
for each finite set $X \subset \Gamma$, $\Phi(X) \in \mathcal{A}_X$ and
$\Phi(X) ^*= \Phi(X)$. As with the Lieb-Robinson bound proven in \cite{nachtergaele2009}, 
we will need a growth condition on the set of interactions $\Phi$ for which we can prove
the existence of the dynamics in the thermodynamic limit.
This condition is expressed in terms of the following norm. 
For any $a \geq 0$, denote by $\mathcal{B}_a(\Gamma)$ the set of interactions
for which
\begin{equation} \label{eq:defphia}
\| \Phi \|_a \, := \, \sup_{x,y \in \Gamma}  \frac{1}{F_a (d(x,y))} \,
\sum_{X \ni x,y} \| \Phi(X) \| \, < \, \infty.
\end{equation}

Now, for a fixed sequence of local Hamiltonians $H^{\rm loc} = \{H_x \}_{x\in\Gamma}$, as
described above, an interaction $\Phi \in \mathcal{B}_a(\Gamma)$, and a finite subset
$\Lambda \subset \Gamma$, we will consider self-adjoint Hamiltonians of the form
\begin{equation} \label{eq:localham}
H_{\Lambda} \, = \, H^{\rm loc}_{\Lambda} \, + \, H^{\Phi}_{\Lambda} \, = \, \sum_{x \in \Lambda} H_x \, + \, \sum_{X \subset \Lambda} \Phi(X),
\end{equation}
acting on $\mathcal{H}_{\Lambda}$ (with domain given by $\bigotimes_{x \in \Lambda} D(H_x)$ where $D(H_x) \subset \cH_x$ denotes the domain of $H_x$). As these operators are self-adjoint, they generate a dynamics, or time evolution, $\{ \tau_t^{\Lambda} \}$,
which is the one parameter group of automorphisms defined by
\begin{equation*}
\tau_t^{\Lambda}(A) \, = \, e^{it H_{\Lambda}} \, A \, e^{-itH_{\Lambda}} \quad \mbox{for any} \quad A \in \mathcal{A}_{\Lambda}.
\end{equation*}

\begin{thm}
Under the conditions stated above, for all $t \in \bR$, $A \in \mathcal{A}_{\Gamma}$, 
the norm limit 
\begin{equation}\label{eq:claim} 
\lim_{\Lambda \to \Gamma} \, \tau_t^{\Lambda} (A) = \tau_t(A)
\end{equation} exists in the sense of non-decreasing exhaustive sequences of finite
volumes $\Lambda$ and defines a group of $*-$automorphisms $\tau_t$ on the completion of
$\mathcal{A}_\Gamma$. The convergence is uniform for $t$ in a compact set.
\end{thm}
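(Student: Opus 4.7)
The plan is to mimic the standard quasi-local existence proof for quantum spin systems (Bratteli-Robinson), using the Lieb-Robinson estimate of \cite{nachtergaele2009} to control boundary terms. Since each $\tau_t^\Lambda$ is an isometry on $\mathcal{A}_\Lambda$, it suffices to prove norm convergence on the dense subalgebra of strictly local observables $A\in\mathcal{A}_X$, $X\subset\Gamma$ finite; the limit then extends uniquely to the $C^*$-completion by an $\varepsilon/3$ argument, and the algebraic and continuity properties transfer to the limit.

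First, I would fix a finite $X$, $A\in\mathcal{A}_X$, and two finite volumes $X\subset\Lambda\subset\Lambda'\subset\Gamma$, and use the telescoping identity
\begin{equation*}
\tau_t^{\Lambda'}(A)-\tau_t^{\Lambda}(A)\,=\,i\int_0^t\tau_s^{\Lambda'}\!\bigl([\,H_{\Lambda'}-H_{\Lambda},\,\tau_{t-s}^{\Lambda}(A)\,]\bigr)\,ds.
\end{equation*}
The difference splits as $H_{\Lambda'}-H_{\Lambda}=\sum_{x\in\Lambda'\setminus\Lambda}H_x+\sum_{Y}\Phi(Y)$, where the second sum runs over $Y\subset\Lambda'$ with $Y\cap(\Lambda'\setminus\Lambda)\neq\emptyset$. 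Because $\tau_{t-s}^{\Lambda}(A)\in\mathcal{A}_\Lambda$ acts as $(\,\cdot\,)\otimes\idty$ on $\mathcal{H}_{\Lambda}\otimes\mathcal{H}_{\Lambda'\setminus\Lambda}$, it commutes with each unbounded $H_x$ for $x\notin\Lambda$, so the on-site terms drop out and only the interaction terms with $Y$ straddling the boundary remain.

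Next, I apply the Lieb-Robinson bound of \cite{nachtergaele2009}, which in this bounded-interaction setting yields an estimate of the form
\begin{equation*}
\bigl\|[\tau_{t-s}^{\Lambda}(A),\Phi(Y)]\bigr\|\,\le\,2\|A\|\,\|\Phi(Y)\|\,\frac{e^{v|t-s|}-1}{\|F\|}\sum_{x\in X,\,y\in Y}F(d(x,y)),
\end{equation*}
with $v$ proportional to $C\|\Phi\|_a$ (and the factor $e^{-a\,d(x,y)}$ implicit in $F_a$ if one works with $F_a$). Summing $\|\Phi(Y)\|$ over $Y$ containing a fixed boundary point $z\in\Lambda'\setminus\Lambda$ is controlled by $\|\Phi\|_a$, and the remaining sum $\sum_{x\in X,z\in\Lambda'\setminus\Lambda}F_a(d(x,z))$ is finite by (\ref{eq:fint}) and, crucially, tends to $0$ as $\Lambda\nearrow\Gamma$ by dominated convergence on the counting measure. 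Integrating over $s\in[0,t]$ gives an estimate uniform for $t$ in a compact set, proving that $\{\tau_t^\Lambda(A)\}_\Lambda$ is Cauchy in norm; this is the main technical step.

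Denoting the limit by $\tau_t(A)$, I extend by density: for general $A\in\overline{\mathcal{A}_\Gamma}$, isometry of each $\tau_t^\Lambda$ together with an $\varepsilon/3$ approximation by a local observable gives a well-defined, isometric $\tau_t$ on the completion, with convergence uniform on compact $t$-intervals. The $*$-homomorphism property $\tau_t(AB)=\tau_t(A)\tau_t(B)$ and $\tau_t(A^*)=\tau_t(A)^*$ pass to the limit directly; the group law $\tau_{t+s}=\tau_t\circ\tau_s$ follows from the finite-volume group law combined with the uniform convergence and another $\varepsilon/3$ split, and surjectivity is obtained from $\tau_{-t}$ as an inverse. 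The hardest point is verifying that the Lieb-Robinson estimate from \cite{nachtergaele2009}, originally stated for single volumes, applies uniformly in $\Lambda$ with constants depending only on $\|\Phi\|_a$, $\|F\|$, and $C$, so that the boundary sum genuinely vanishes in the thermodynamic limit.
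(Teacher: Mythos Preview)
Your overall strategy---telescoping via Duhamel, dropping the on-site pieces by locality, and controlling the boundary interaction terms with the Lieb--Robinson bound of \cite{nachtergaele2009}---is exactly the paper's. The difference lies in how the unbounded on-site Hamiltonians are handled, and this is where your argument has a gap.

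You write the telescoping identity directly for the full dynamics and then argue that the unbounded piece $\sum_{x\in\Lambda'\setminus\Lambda}H_x$ of $H_{\Lambda'}-H_\Lambda$ commutes with $\tau_{t-s}^{\Lambda}(A)$ and hence drops out. The conclusion is correct, but the identity itself is not a priori well-defined: the map $s\mapsto\tau_s^{\Lambda'}\bigl(\tau_{t-s}^{\Lambda}(A)\bigr)$ is not obviously norm-differentiable, since differentiating each factor formally produces commutators with the unbounded operators $H_{\Lambda'}$ and $H_\Lambda$ individually. That the unbounded contributions cancel is precisely what has to be justified, not assumed in writing down the integral formula. The paper deals with this by first passing to the interaction picture: it introduces $\mathcal{U}_\Lambda(t,s)=e^{itH_\Lambda^{\mathrm{loc}}}e^{-i(t-s)H_\Lambda}e^{-isH_\Lambda^{\mathrm{loc}}}$, which satisfies an ODE with the \emph{bounded} generator $H_\Lambda^{\mathrm{int}}(s)=e^{isH_\Lambda^{\mathrm{loc}}}H_\Lambda^{\Phi}e^{-isH_\Lambda^{\mathrm{loc}}}$, proves the Cauchy property for the interaction-picture evolution $\tau_{t,\mathrm{int}}^{\Lambda}$, and then transfers the result back via $\tau_t^{\Lambda}(A)=\tau_{t,\mathrm{int}}^{\Lambda}\bigl(e^{itH_X^{\mathrm{loc}}}Ae^{-itH_X^{\mathrm{loc}}}\bigr)$ for $A\in\mathcal{A}_X$. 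With bounded generators the Duhamel differentiation is unambiguous, and the difference $H_{\Lambda_m}^{\mathrm{int}}(s)-H_{\Lambda_n}^{\mathrm{int}}(s)$ reduces (after conjugating by $e^{isH_{\Lambda_n}^{\mathrm{loc}}}$) exactly to the sum over $\Phi(Z)$ with $Z\cap(\Lambda_m\setminus\Lambda_n)\neq\emptyset$, which is your boundary sum. From that point on the Lieb--Robinson estimate and the tail summation are carried out just as you describe.
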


\begin{proof} 
Let $\Lambda \subset \Gamma$ be a finite set. Consider the unitary propagator
\begin{equation} \label{eq:intuni}
 \cU_{\Lambda} (t,s) = e^{i t H_{\Lambda}^{\text{loc}} } \, e^{-i (t-s) H_{\Lambda}} \, e^{-is H_{\Lambda}^{\text{loc}}} 
\end{equation}
and its associated {\it interaction-picture} evolution defined by
\begin{equation} \label{eq:intpic}
\tau^{\Lambda}_{t, \text{int}} (A) = \cU_{\Lambda} (0,t) \, A \; \cU_{\Lambda} (t,0) \quad \mbox{for all } A \in \mathcal{A}_{\Gamma} \, .
\end{equation} 
Clearly, $\mathcal{U}_{\Lambda}(t,t) = \idty$ for all $t \in \mathbb{R}$, and it is also easy to check 
that 
\begin{equation*}
i \frac{\rd}{\rd t} \, \cU_{\Lambda} (t,s) =  H_{\Lambda}^{\text{int}} (t) \, \cU_{\Lambda} (t,s)  \quad \mbox{and} \quad
- i \frac{\rd}{\rd s} \, \cU_{\Lambda} (t,s) =  \cU_{\Lambda} (t,s) \, H_{\Lambda}^{\text{int}} (s) 
\end{equation*}
with the time-dependent generator 
\begin{equation} \label{eq:gen}
H^{\text{int}}_{\Lambda} (t) = e^{i H_{\Lambda}^{\text{loc}} t} H_{\Lambda}^{\Phi} e^{-i H^{\text{loc}}_{\Lambda} t} = \sum_{Z \subset \Lambda} e^{i H_{\Lambda}^{\text{loc}} t} \, \Phi (Z)  \, e^{-i H^{\text{loc}}_{\Lambda} t} \, . 
\end{equation}

Fix $T>0$ and $X \subset \Gamma$ finite. For any $A \in \mathcal{A}_X$, we will show that 
for any non-decreasing, exhausting sequence $\{ \Lambda_n \}$ of $\Gamma$, the sequence
$\{ \tau_{t, \text{int}}^{\Lambda_n}(A) \}$ is Cauchy in norm, uniformly for $t \in [-T,T]$.
Moreover, the bounds establishing the Cauchy property depend on $A$ only through $X$ and $\|A\|$.
Since
\begin{equation*}
\tau_t^{\Lambda} (A) = \tau_{t,\text{int}}^{\Lambda} \left(e^{itH_{\Lambda}^{\text{loc}}} \, A \,  e^{-it H_{\Lambda}^{\text{loc}}} \right) = 
\tau_{t,\text{int}}^{\Lambda} \left(e^{it \sum_{x \in X}H_x} \, A \, e^{-i t \sum_{x \in X} H_x} \right) \, ,
\end{equation*}
an analogous statement then immediately follows for $\{ \tau_t^{\Lambda_n}(A) \}$, since they
are all also localized in $X$ and have the same norm as $\|A\|$.

Take $n \leq m$ with $X \subset \Lambda_n \subset \Lambda_m$ and calculate
\begin{equation} \label{eq:diff}
\tau_{t,\text{int}}^{\Lambda_m} (A) - \tau_{t,\text{int}}^{\Lambda_n} (A) = \int_0^t \frac{\rd}{\rd s} \left\{ \cU_{\Lambda_m} (0,s) \, \cU_{\Lambda_n} (s,t) \, A \, \cU_{\Lambda_n} (t,s) \, \cU_{\Lambda_m} (s,0) \right\} \, ds \, .
\end{equation}
A short calculation shows that
\begin{equation}
\begin{split}
\frac{\rd}{\rd s} \cU_{\Lambda_m} (0,s) & \, \cU_{\Lambda_n} (s,t) \, A \, \cU_{\Lambda_n} (t,s) \, \cU_{\Lambda_m} (s,0) \\
& = \, i \mathcal{U}_{\Lambda_m}(0,s) \left[ \left( H^{\text{int}}_{\Lambda_m}(s) - H^{\text{int}}_{\Lambda_n}(s) \right), \mathcal{U}_{\Lambda_n}(s,t) \, A \, \mathcal{U}_{\Lambda_n}(t,s) \right] \mathcal{U}_{\Lambda_m}(s,0) \\
& = \, i \mathcal{U}_{\Lambda_m}(0,s) e^{is H_{\Lambda_n}^{\text{loc}}} \left[ \tilde{B}(s), \tau_{s-t}^{\Lambda_n} \left( \tilde{A}(t) \right) \right] e^{-is H_{\Lambda_n}^{\text{loc}}} \mathcal{U}_{\Lambda_m}(s,0) \, ,
\end{split}
\end{equation}
where
\begin{equation} \label{eq:tat}
\tilde{A}(t) = e^{-it H_{\Lambda_n}^{\text{loc}}} A \, e^{it H_{\Lambda_n}^{\text{loc}}} = e^{-it H_{X}^{\text{loc}}} A \, e^{it H_{X}^{\text{loc}}} 
\end{equation}
and
\begin{eqnarray} \label{eq:tbs}
\tilde{B}(s) & = & e^{-is H_{\Lambda_n}^{\text{loc}}}\left( H^{\text{int}}_{\Lambda_m}(s) - H^{\text{int}}_{\Lambda_n}(s) \right) e^{is H_{\Lambda_n}^{\text{loc}}} \nonumber \\
& = & \sum_{Z \subset \Lambda_m} e^{is H_{\Lambda_m \setminus \Lambda_n}^{\text{loc}}} \Phi(Z)  e^{-is H_{\Lambda_m \setminus \Lambda_n}^{\text{loc}}} - \sum_{Z \subset \Lambda_n} \Phi(Z) \nonumber \\
& = & \sum_{\stackrel{Z \subset \Lambda_m:}{ Z \cap \Lambda_m \setminus \Lambda_n \neq \emptyset}} e^{is H_{\Lambda_m \setminus \Lambda_n}^{\text{loc}}} \Phi(Z)  e^{-is H_{\Lambda_m \setminus \Lambda_n}^{\text{loc}}} 
\end{eqnarray}
Combining the results of (\ref{eq:diff}) -(\ref{eq:tbs}), and using unitarity, we find that
\begin{equation}
\left\| \tau_{t,\text{int}}^{\Lambda_m} (A) - \tau_{t,\text{int}}^{\Lambda_n} (A)  \right\| \leq \int_0^t \left\| \left[ \tau_{s-t}^{\Lambda_n} \left( \tilde{A}(t) \right), \tilde{B}(s)  \right] \right\| \, ds \,
\end{equation}
and by the Lieb-Robinson bound proven in \cite{nachtergaele2009}, it is clear that
\begin{eqnarray}
&&\left\| \left[ \tau_{s-t}^{\Lambda_n} \left( \tilde{A}(t) \right), \tilde{B}(s)  \right] 
\right\|\\
& \leq &  \sum_{\stackrel{Z \subset \Lambda_m:}{ Z \cap \Lambda_m \setminus 
\Lambda_n \neq \emptyset}} 
\left\| \left[ \tau_{s-t}^{\Lambda_n} \left( \tilde{A}(t) \right),  
e^{is H_{\Lambda_m \setminus \Lambda_n}^{\text{loc}}} \Phi(Z)  
e^{-is H_{\Lambda_m \setminus \Lambda_n}^{\text{loc}}}  \right] \right\| \nonumber \\
& \leq & \frac{2 \| A \|}{C_a} \left( e^{2 \| \Phi \|_a C_a |t-s|} - 1 \right)  \sum_{y \in \Lambda_m \setminus \Lambda_n} \sum_{\stackrel{Z \subset \Lambda_m:}{ y \in Z }}  \| \Phi(Z) \| \sum_{x \in X} \sum_{z \in Z} F_a( d(x,z)) \nonumber \\ & \leq &  \frac{2 \| A \|}{C_a} \left( e^{2 \| \Phi \|_a C_a |t-s|} - 1 \right)  \sum_{y \in \Lambda_m \setminus \Lambda_n} \sum_{z \in \Lambda_m} \sum_{\stackrel{Z \subset \Lambda_m:}{ y, z \in Z }}  \| \Phi(Z) \| \sum_{x \in X}  F_a( d(x,z)) \nonumber \\
& \leq & \frac{2 \| A \| \| \Phi \|_a}{C_a} \left( e^{2 \| \Phi \|_a C_a |t-s|} - 1 \right)  \sum_{y \in \Lambda_m \setminus \Lambda_n}  \sum_{x \in X}  \sum_{z \in \Lambda_m}  F_a( d(x,z)) F_a(d(z,y)) \nonumber \\
& \leq & 2 \| A \| \| \Phi \|_a \left( e^{2 \| \Phi \|_a C_a |t-s|} - 1 \right)  \sum_{y \in \Lambda_m \setminus \Lambda_n}  \sum_{x \in X} F_a( d(x,y)) \, .\nonumber
\end{eqnarray}
With the estimate above and the properties of the function $F_a$, it is clear that
\begin{equation}
\sup_{t \in [-T,T]} \left\| \tau_{t,\text{int}}^{\Lambda_m} (A) - \tau_{t,\text{int}}^{\Lambda_n} (A)  \right\| \to 0 \quad \mbox{ as } n, m \to \infty \, ,
\end{equation}
and the rate of convergence only depends on the norm $\| A \|$ and the set $X$ where $A$ is
supported. This proves the claim.
\end{proof}

If all
local Hamiltonians $H_x$ are bounded, $\{\tau_t\}$ is strongly continuous. 
If the $H_x$ are allowed to be densely defined unbounded self-adjoint operators, 
we only have weak continuity and the dynamics is more naturally defined on a
von Neumann algebra. This can be done when we have a suffiently nice invariant
state for the model with only the on-site Hamiltonians. E.g., suppose that
for each $x\in \Gamma$, we have a normalized eigenvector $\phi_x$ of $H_x$.
Then, for all $A\in\mathcal{A}_\Lambda$, for any finite $\Lambda\subset \Gamma$,
define
\begin{equation}
\rho(A)=\langle \bigotimes_{x\in\Lambda}\phi_x, A \bigotimes_{x\in\Lambda}\phi_x\rangle\, .
\end{equation}
$\rho$ can be regarded as a state of the infinite system defined on the norm completion 
of $\mathcal{A}_\Gamma$. The GNS Hilbert space $\mathcal{H}_\rho$
of $\rho$ can be constructed as the closure of 
$\mathcal{A}_\Gamma \bigotimes_{x\in\Gamma}\phi_x$. 
Let $\psi\in \mathcal{A}_\Gamma \bigotimes_{x\in\Gamma}\phi_x$.
Then 
\begin{equation}
\begin{split}
\left\| \left( \tau_t(A) - \tau_{t_0}(A) \right)  \psi \right\| \leq & \left\| 
\left( \tau_t(A) - \tau_t^{(\Lambda_n)}(A) \right)  \psi \right\| \\
+ & \left\| \left( \tau_t^{(\Lambda_n)}(A) - \tau_{t_0}^{(\Lambda_n)}(A) \right)  
\psi \right\| +
\left\| \left( \tau_{t_0}^{(\Lambda_n)}(A) - \tau_{t_0}(A) \right)  \psi \right\| \, ,
\end{split}
\end{equation}
For sufficiently large $\Lambda_n$, the $\lim_{t\to t_0}$ of middle term vanishes 
by Stone's theorem. The two other terms are handled by \ref{eq:claim}. It is clear
how to extend the continuity to $\psi\in\mathcal{H}_\rho$.
 
We will discuss this type of situation in more detail in the next three sections
where we consider models that include quadratic (unbounded) interactions as well.

%
%
%
%

\section{The Harmonic Lattice}\label{sec:harm}

As noted in the introduction, we will consider anharmonic perturbations of infinite
harmonic lattices. In this section we discuss the properties
of the harmonic systems that we need to assume in general in order
to study the perturbations in the thermodynamic limit. We will also
show in detail that a standard harmonic lattice model posesses all the
required properties.

\subsection{The CCR algebra of observables}

We begin by introducing the CCR algebra on which the harmonic dynamics will be
defined. Following \cite{manuceau1973}, one can define the CCR algebra over any 
real linear space
$\mathcal{D}$ equipped with a non-degenerate, symplectic bilinear form $\sigma$, i.e.
$\sigma : \mathcal{D} \times \mathcal{D} \to \mathbb{R}$ with the property that
if $\sigma(f,g) = 0$ for all $f \in \mathcal{D}$, then $g = 0$, and 
\begin{equation} \label{eq:symp}
\sigma(f,g) = - \sigma(g,f) \quad \mbox{for all } f, g \in \mathcal{D} .
\end{equation}
In typical examples, $\mathcal{D}$ will be a complex inner product space associated 
with $\Gamma$, e.g.
$\mathcal{D} = \ell^2( \Gamma)$ or a subspace thereof such as $\mathcal{D} = \ell^1( \Gamma)$, 
or $\ell^2(\Gamma_0)$, with $\Gamma_0\subset\Gamma$, and
\begin{equation}
\sigma(f,g) = \mbox{Im} \left[ \langle f, g \rangle \right] \, .
\end{equation}
The Weyl operators over $\mathcal{D}$ are defined by associating non-zero elements
$W(f)$ to each $f \in \mathcal{D}$ which satisfy
\begin{equation} \label{eq:invo}
W(f)^* = W(-f) \quad \mbox{for each } f \in \mathcal{D} \, ,
\end{equation}
and
\begin{equation} \label{eq:weylrel}
W(f) W(g) = e^{-i \sigma(f,g)/2} W(f+g) \quad \mbox{for all } f, g \in \mathcal{D} \, .
\end{equation}
It is well-known that there is a unique, up to $*$-isomorphism, $C^*$-algebra generated
by these Weyl operators with the property that $W(0) = \idty$, $W(f)$ is unitary
for all $f \in \mathcal{D}$, and $\| W(f) - \idty \| = 2$ for all $ f \in \mathcal{D} \setminus \{0 \}$, 
see e.g. Theorem 5.2.8 \cite{bratteli1997}. This algebra, commonly known as the CCR algebra, or Weyl algebra, over
$\mathcal{D}$, we will denote by $\mathcal{W} = \mathcal{W}( \mathcal{D})$.

\subsection{Quasi-free dynamics}
The anharmonic dynamics we study in this paper will be defined as
perturbations of harmonic, technically {\em quasi-free}, dynamics.
A quasi-free dynamics on $\mathcal{W}(\mathcal{D})$ is a one-parameter
group of *-automorphisms $\tau_t$ of the form
\begin{equation}
\tau_t(W(f))=W(T_t f), \quad f\in \mathcal{D}
\end{equation}
where $T_t:\mathcal{D}\to\mathcal{D}$ is a group of real-linear, symplectic
transformations, i.e., 
\begin{equation} \label{eq:sympT}
\sigma(T_t f, T_t g)= \sigma(f,g)\, .
\end{equation}
As $\| W(f) - W(g) \| = 2$ for all $ f\neq g \in \mathcal{D}$, one should not
expect $\tau_t$ to be strongly continuous; only a weaker form of
continuity is present. This means that $\tau_t$ does {\em not}
define a $C^*$-dynamical system on $\mathcal{W}$, and thus we 
look for a $W^*$-dynamical setting in which the weaker form of
continuity is naturally expressed. 

In the present context, it suffices to regard a {\it $W^*$-dynamical system} as a pair $\{ \mathcal{M}, \alpha_t \}$
where $\mathcal{M}$ is a von Neumann algebra and $\alpha_t$ is a weakly continuous, 
one parameter group of $*$-automorphisms of $\mathcal{M}$. For the harmonic systems we are 
considering, a specific  $W^*$-dynamical system arises as follows. Let $\rho$ be a 
state on $\mathcal{W}$ and denote by $\left( \mathcal{H}_{\rho}, \pi_{\rho}, 
\Omega_{\rho} \right)$ the corresponding GNS representation. We will assume
that $\rho$ is both regular and $\tau_t$-invariant. Recall that $\rho$ is 
regular if and only if $t \mapsto \rho( W(tf) )$ is continuous 
for all $f \in \mathcal{D}$, and $\tau_t$-invariance means
\begin{equation}
\rho( \tau_t(A) ) = \rho(A) \quad \mbox{for all } A \in \mathcal{W}.
\end{equation}
For the von Neumann 
algebra $\mathcal{M}$, take the weak-closure of $\pi_{\rho}( \mathcal{W})$ in $\mathcal{L}
( \mathcal{H}_{\rho})$ and let $\alpha_t$ be the weakly continuous, 
one parameter group of $*$-automorphisms of $\mathcal{M}$ obtained by lifting $\tau_t$ to 
$\mathcal{M}$. The latter step is possible since $\rho$ is $\tau_t$-invariant, see e.g. Corollary 2.3.17 \cite{bratteli1987}.

\subsection{Lieb-Robinson bounds for harmonic lattices}
To prove the existence of the dynamics for anharmonic models, we use that the unperturbed harmonic
system satisfies a Lieb-Robinson bound. Such an estimate depends directly on
properties of $\sigma$ and
$T_t$. In fact, it is easy to calculate that
\begin{eqnarray}
\left[ \tau_t(W(f)), W(g) \right]  & = & \left\{ W( T_t f) - W(g) W( T_tf) W(-g) \right\} W(g)  \nonumber \\
& = & \left\{ 1 - e^{i \sigma( T_tf, g)} \right\} W( T_tf) W(g) \, ,
\end{eqnarray}
using the Weyl relations (\ref{eq:weylrel}).  For the examples we consider below, one can prove that
for every $a>0$, there exists positive numbers $c_a$ and $v_a$ for which 
\begin{equation} \label{eq:dynestex}
\left| \sigma( T_t f, g ) \right| \leq c_a e^{ v_a |t|} \sum_{x, y \in \mathbb{Z}^d} |f(x)| \, |g(y)| \frac{e^{-a|x-y|}}{(1+|x-y|)^{d+1}}
\end{equation}
holds for all $t \in \mathbb{R}$ and all $f, g \in \ell^2( \mathbb{Z}^d)$. In general, we will assume that
the harmonic dynamics satisfies an estimate of this type. Namely, we suppose that there exists a number
$a_0 >0$ for which given $0<a \leq a_0$, there are numbers $c_a$ and $v_a$ for which 
\begin{equation} \label{eq:dynest2}
\left| 1 - e^{i\sigma( T_t f, g )} \right| \leq c_a e^{ v_a |t|} \sum_{x, y \in \Gamma} |f(x)| \, |g(y)| F_a \left( d(x,y) \right) 
\end{equation}
holds for all $t \in \mathbb{R}$ and all $f, g \in \ell^2( \Gamma)$. Here we describe the spatial decay in $\Gamma$
through the functions $F_a$ as introduced in Section~\ref{sec:bdints}.  Since the Weyl operators are unitary, the norm estimate
\begin{equation}  \label{eq:freest}
\left\| \left[ \tau_t(W(f)), W(g) \right]  \right\| \leq  c_a e^{v_a |t| } \sum_{x, y} |f(x)| \, |g(y)| \, F_a \left( d(x,y) \right) \, ,
\end{equation}
readily follows.

\subsection{An important example}

Using the example given below, we illustrate the general discussion above in terms of 
a standard harmonic model defined over $\Gamma = \mathbb{Z}^d$.
We begin with a description of some well known calculations that are valid
for these models when restricted to a finite volume. This analysis motivates
the definition of the harmonic dynamics in the infinite volume. We then 
demonstrate that this infinite volume dynamics satisfies a Lieb-Robinson bound. 
By representing this dynamics in a suitable state, the relevant weak-continuity 
is readily verified. Interestingly, our analysis also applies to the massless case
of $\omega =0$, see below, and we discuss this briefly. We end this subsection
with some final comments.

\subsubsection{Finite volume analysis} We consider a system of coupled harmonic oscillators restricted to a finite volume. 
Specifically on cubic subsets $\Lambda_L \, = \, \left( -L, L \right]^d  \subset \mathbb{Z}^d$,
we analyze Hamiltonians of the form
\begin{equation} \label{eq:harham}
H_L^{h} \, = \,  \sum_{ x \in \Lambda_L} p_{ x }^2 \, +\, \omega^2 \, q_{ x}^2 \, + \,
\sum_{j = 1}^{d}  \lambda_j \, (q_{ x } - q_{ x + e_j})^2
\end{equation}
acting in the Hilbert space
\begin{equation} \label{eq:hspace}
\mathcal{H}_{\Lambda_L} = \bigotimes_{x \in \Lambda_L} L^2(\mathbb{R}, dq_x).
\end{equation}
Here the quantities $p_x$ and $q_x$, which appear in (\ref{eq:harham}) above, are the 
single site momentum and position operators regarded as operators on the full Hilbert space $\mathcal{H}_{\Lambda_L}$
by setting 
\begin{equation} \label{eq:pandq}
p_x = \idty \otimes \cdots \otimes \idty
\otimes -i \frac{d}{dq} \otimes \idty \cdots \otimes \idty \quad
\mbox{ and } \quad q_x = \idty \otimes \cdots \otimes \idty \otimes q \otimes \idty
\cdots \otimes \idty,
\end{equation}
i.e., these operators act non-trivially only in the $x$-th factor of $\mathcal{H}_{\Lambda_L}$. These operators satisfy the canonical commutation relations
\begin{equation} \label{eq:comm}
[p_x, p_y] \, = \, [q_x, q_y] \, = \, 0 \quad \mbox{ and } \quad
[q_x, p_y] \, = \, i \delta_{x,y},
\end{equation}
valid for all $x, y \in \Lambda_L$.  In addition,  $\{ e_j \}_{j=1}^{d}$ are the canonical basis vectors in $\mathbb{Z}^{d}$, the numbers
$\lambda_j \geq 0$ and $\omega \geq 0$ are the parameters of the system, and the Hamiltonian
is assumed to have periodic boundary conditions, in the sense that $q_{x+e_j} = q_{x-(2L-1)e_j}$ 
if $x \in \Lambda_L$ but $x+ e_j \not\in \Lambda_L$. It is well-known that Hamiltonians 
of this form can be diagonalized in Fourier space. We review this quickly to establish some notation and
refer the interested reader to \cite{nachtergaele2009} for more details.
 
Introducing the operators
\begin{equation} \label{eq:Q+Pk}
Q_k \, = \, \frac{1}{ \sqrt{ | \Lambda_L |}} \sum_{x \in \Lambda_L} e^{- i k \cdot x} q_x \quad \mbox{and} \quad
P_k \, = \, \frac{1}{ \sqrt{ | \Lambda_L |}} \sum_{x \in \Lambda_L} e^{- i k \cdot x} p_x \, ,
\end{equation}
defined for each $k \in \Lambda_L^* \, = \, \left\{ \, \frac{x \pi}{L} \, : \, x \in \Lambda_L \, \right\} $,
and setting 
\begin{equation} \label{eq:defgamma}
\gamma(k) \, = \,  \sqrt{ \omega^2 \, + \, 4 \sum_{j=1}^{d} \lambda_j \, \sin^2(k_j/2) },
\end{equation}
one finds that 
\begin{equation} \label{eq:diagham}
H_L^h \, = \, \sum_{k \in \Lambda_L^*} \, \gamma(k) \, \left( \, 2 \, b_k^*\,  b_k \, + \, 1 \, \right) \, 
\end{equation}
where the operators $b_k$ and $b_k^*$ satisfy
\begin{equation} \label{eq:beqns}
b_k \, = \, \frac{1}{ \sqrt{2 \gamma(k)}} \, P_k - i \sqrt{ \frac{\gamma(k)}{2}} \, Q_k \quad {\rm and} \quad
b_k^* \, = \, \frac{1}{ \sqrt{2 \gamma(k)}} \, P_{-k} + i \sqrt{ \frac{\gamma(k)}{2}} \, Q_{-k} \, .
\end{equation}
In this sense, we regard the Hamiltonian $H_L^h$ as diagonalizable.

Using the above diagonalization, one can determine the action of the dynamics corresponding to 
$H_L^h$ on the Weyl algebra $\mathcal{W}( \ell^2( \Lambda_L) )$. In fact, by setting
\begin{equation}
W(f) = \mbox{exp} \left[ i \sum_{x \in \Lambda_L} \mbox{Re}[f(x)] q_x + \mbox{Im}[f(x)] p_x \right] \, ,
\end{equation}
for each $f \in \ell^2( \Lambda_L)$, it is easy to verify that (\ref{eq:invo}) and (\ref{eq:weylrel}) hold with
$\sigma(f,g) = \mbox{Im}[ \langle f, g \rangle]$.
It is convenient to express these Weyl operators in terms of annihilation and creation operators, i.e.,
\begin{equation} \label{eq:anncre}
a_x \, = \, \frac{1}{\sqrt{2}} \left( q_x \, + \, i p_x \right) \quad \mbox{and} \quad a^*_x \, = \, \frac{1}{\sqrt{2}} \left( q_x \, - \, i p_x \right),
\end{equation}
which satisfy 
\begin{equation} \label{eq:comrela}
[a_x, a_y] = [a_x^*, a_y^*] = 0 \quad \mbox{and} \quad [a_x, a_y^*] = \delta_{x,y} \quad \mbox{for all } x,y \in \Lambda_L \, .
\end{equation}
One finds that
\begin{equation} \label{eq:weyla}
W(f)  =  \mbox{exp} \left[ \frac{i}{\sqrt{2}} \left( a(f) \, + \, a^*(f) \right) \right] \, ,
\end{equation}
where, for each $f \in \ell^2(\Lambda_L)$, we have set
\begin{equation} \label{eq:defafa*f}
a(f) \, = \, \sum_{x \in \Lambda_L} \overline{f(x)} \, a_x,  \quad a^*(f) \, = \, \sum_{x \in \Lambda_L} f(x) \, a_x^*\, .
\end{equation}

Now, the dynamics corresponding to $H_L^h$, which we denote by $\tau_t^{L}$, is trivial with 
respect to the diagonalizing variables, i.e., 
\begin{equation}
\tau_t^{L}(b_k) = e^{-2i \gamma(k) t} b_k \quad \mbox{and} \quad \tau_t^{L}(b_k^*) = e^{2i \gamma(k) t} b_k^* \, , 
\end{equation}
where $b_k$ and $b_k^*$ are as defined in (\ref{eq:beqns}). 
Hence, if we further introduce
\begin{equation} \label{eq:defb}
b_x = \frac{1}{ \sqrt{| \Lambda_L|}} \sum_{k \in \Lambda_L^*} e^{ikx} b_k \quad \mbox{and} \quad 
b_x^* = \frac{1}{ \sqrt{| \Lambda_L|}} \sum_{k \in \Lambda_L^*} e^{ikx} b_k^*, 
\end{equation}
for each $x \in \Lambda_L$ and, analogously to (\ref{eq:defafa*f}), define
\begin{equation} \label{eq:defbfb*f}
b(f) \, = \, \sum_{x \in \Lambda_L} \overline{f(x)} \, b_x,  \quad b^*(f) \, = \, \sum_{x \in \Lambda_L} f(x) \, b_x^*,
\end{equation}
for each $f \in \ell^2( \Lambda_L)$, then one has that
\begin{equation} \label{eq:taub}
\tau_t^L \left( b(f) \right) = b \left( [\mathcal{F}^{-1} M_t \mathcal{F}] f \right) \, , 
\end{equation}
where $\mathcal{F}$ is the unitary Fourier transform on $\ell^2(\Lambda_L)$ and $M_t$ is the 
operator of multiplication by $e^{2i \gamma(k) t}$ in Fourier space with $\gamma(k)$ as in (\ref{eq:defgamma}). We need only determine
the relation between the $a$'s and the $b$'s.

A short calculation shows that there exists a linear mapping $U: \ell^2( \Lambda_L) \to \ell^2( \Lambda_L)$
and an anti-linear mapping $V: \ell^2( \Lambda_L) \to \ell^2( \Lambda_L)$ for which
\begin{equation} \label{eq:b=a}
b(f) = a(Uf) + a^*(Vf) \, ,
\end{equation}
a relation know in the literature as a Bogoliubov transformation \cite{manuceau1968}.
In fact, one has that
\begin{equation} \label{eq:defU+V}
U = \frac{i}{2} \mathcal{F}^{-1} M_{\Gamma_+} \mathcal{F} \quad \mbox{ and } \quad V = \frac{i}{2} \mathcal{F}^{-1} M_{\Gamma_-} \mathcal{F} J
\end{equation}
where $J$ is complex conjugation and $M_{\Gamma_{\pm}}$ is the operator of multiplication
by
\begin{equation} \label{eq:multg}
\Gamma_{\pm}(k) = \frac{1}{\sqrt{ \gamma(k)}} \pm \sqrt{ \gamma(k)} \, ,
\end{equation}
with $\gamma(k)$ as in (\ref{eq:defgamma}).
Using the fact that $\Gamma_{\pm}$ is real valued and even, it is easy to check that
\begin{equation} \label{eq:bog1}
U^* U - V^* V = \idty = U U^* - V V^*
\end{equation}
and
\begin{equation} \label{eq:bog2}
V^* U - U^* V = 0 =  V U^* - UV^* \,
\end{equation}
where we stress that $V^*$ is the adjoint of the {\it anti-linear} mapping $V$. The relation (\ref{eq:b=a}) is invertible, in fact,
\begin{equation}
a(f) = b(U^*f) - b^*(V^*f) \, ,
\end{equation}
and therefore
\begin{equation}
W(f) = \mbox{exp} \left[  \frac{i}{ \sqrt{2}} \left( b((U^*-V^*)f) + b^*((U^*-V^*)f) \right) \right] \, .
\end{equation}
Clearly then,
\begin{equation}
\tau_t(W(f)) = W( T_t f) \, ,
\end{equation}
where the mapping $T_t$ is given by
\begin{equation}
T_t = (U+V) \mathcal{F}^{-1} M_t \mathcal{F} (U^*-V^*) \, ,
\end{equation}
and we have used (\ref{eq:taub}).

\subsubsection{Infinite volume dynamics} It is now clear how to define the infinite volume harmonic dynamics.
Consider a subspace $\mathcal{D} \subset \ell^2( \mathbb{Z}^d)$ and define
$\mathcal{W}( \mathcal{D})$ as above with $\sigma(f,g) = \mbox{Im}[ \langle f, g \rangle ]$.
First assume $\omega >0$, take $\gamma : [- \pi, \pi)^d \to \mathbb{R}$ as in 
(\ref{eq:defgamma}), and
set $U$ and $V$ as in (\ref{eq:defU+V}) with (\ref{eq:multg}). If $\omega >0$, both 
$U$ and $V$ are bounded transformations on $\ell^2( \mathbb{Z}^d)$. We will treat 
the case $\omega =0$ by a limiting argument. The mapping $T_t$ defined by setting
\begin{equation} \label{eq:deftt}
T_t = (U+V) \mathcal{F}^{-1} M_t \mathcal{F} (U^*-V^*) \, ,
\end{equation}
is well-defined on $\ell^2( \mathbb{Z}^d)$. To define the dynamics on $\mathcal{W}(\mathcal{D})$
we will need to choose subspaces $\mathcal{D}$ that are $T_t$ invariant. 
On such $\mathcal{D}$, $T_t$ is clearly real-linear. With (\ref{eq:bog1}) 
and (\ref{eq:bog2}), one
can easily verify the group properties $T_0 = \idty$, $T_{s+t} = T_s \circ T_t$, and
\begin{equation}
\mbox{Im} \left[ \langle T_t f, T_t g \rangle \right] = \mbox{Im} \left[ \langle f,g \rangle \right] \, ,
\end{equation}
i.e. $T_t$ is sympletic in the sense of (\ref{eq:sympT}). Using Theorem 5.2.8 of \cite{bratteli1997}, 
there is a unique 
one parameter group of $*$-automorphisms on $\mathcal{W}( \mathcal{D})$, which we will denote
by $\tau_t$, that satisfies
\begin{equation}
\tau_t(W(f)) = W(T_tf) \quad \mbox{for all } f \in \mathcal{D} \, .
\end{equation}
This defines the harmonic dynamics on $\mathcal{W}( \mathcal{D})$.

Here it is important that $T_t : \mathcal{D} \to \mathcal{D}$. As we demonstrated in \cite{nachtergaele2009},
the mapping $T_t$ can be expressed as a convolution. In fact, 
\begin{equation} \label{eq:defft}
T_tf = f * \overline{ \left(H_t^{(0)} + \frac{i}{2}(H_t^{(-1)} + H_t^{(1)}) \right)} + \overline{f}*\left( \frac{i}{2}(H_t^{(1)} - H_t^{(-1)}) \right).
\end{equation}
where
\begin{equation}\label{eq:h}
\begin{split}
H^{(-1)}_t(x) &= \frac{1}{(2 \pi)^d} {\rm Im} \left[ \int \frac{1}{ \gamma(k)} e^{i(k \cdot x-2\gamma(k)t)} \, d k \right],
\\
H^{(0)}_t(x) &= \frac{1}{(2 \pi)^d}  {\rm Re} \left[ \int e^{i(k \cdot x - 2\gamma(k)t)} \, dk \right],
\\
H^{(1)}_t(x) &=  \frac{1}{(2 \pi)^d} {\rm Im} \left[ \int \gamma(k) \, e^{i(k \cdot x-2\gamma(k)t)} \, dk \right] .
\end{split}
\end{equation}
Using analysis similar to what is proven in \cite{nachtergaele2009}, the following result holds.
\begin{lem}\label{lem:htx}
Consider the functions defined in (\ref{eq:h}). For $\omega\geq 0, \lambda_1,
\ldots,\lambda_d\geq 0$, but such that  
$c_{\omega,\lambda} = (\omega^2 + 4 \sum_{j=1}^d \lambda_j )^{1/2} >0$, 
and any $\mu >0$, the bounds
\begin{equation}
\begin{split}
\left| H_t^{(0)}(x) \right| &\leq  e^{-\mu \left( |x| - c_{\omega,\lambda} \max \left( \frac{2}{\mu} \, , \, e^{(\mu/2)+1}\right) |t| \right)}
\\
\left| H_t^{(-1)}(x) \right| &\le  c^{-1}_{\omega,\lambda}e^{-\mu \left( |x| - c_{\omega,\lambda} \max \left( \frac{2}{\mu} \, , \, e^{(\mu/2)+1}\right) |t| \right)}
\\
\left| H_t^{(1)}(x) \right| &\le c_{\omega,\lambda}e^{\mu/2}e^{-\mu \left( |x| - c_{\omega,\lambda} \max \left( \frac{2}{\mu} \, , \, e^{(\mu/2)+1}\right) |t| \right)}
\end{split}
\end{equation}
hold for all $t \in \mathbb{R}$ and $x \in \mathbb{Z}^d$. Here  $|x| = \sum_{j=1}^{d} |x_i|$.
\end{lem}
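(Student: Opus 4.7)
The approach rests on the fact that although $\gamma(k)$ contains a square root, the three integrands in (\ref{eq:h}) depend on $k$ only through $p(k):=\gamma(k)^2=(\omega^2+2\sum_j\lambda_j)-\sum_j\lambda_j(e^{ik_j}+e^{-ik_j})$, which is a genuine trigonometric polynomial. The entire functions
\[
\cos(2\gamma t)=\sum_{n\ge0}\frac{(-1)^n(2t)^{2n}}{(2n)!}\,p^n,\qquad
\frac{\sin(2\gamma t)}{\gamma}=\sum_{n\ge0}\frac{(-1)^n(2t)^{2n+1}}{(2n+1)!}\,p^n,\qquad
\gamma\sin(2\gamma t)=\sum_{n\ge0}\frac{(-1)^n(2t)^{2n+1}}{(2n+1)!}\,p^{n+1}
\]
converge uniformly in $k$ because $|p|\le c_{\omega,\lambda}^2$, so they may be integrated termwise against $\cos(k\cdot x)$; the remaining sine parts drop out by the parity $\gamma(-k)=\gamma(k)$, and this captures all three of $H^{(-1)}_t,H^{(0)}_t,H^{(1)}_t$.

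I would then extract two elementary facts about $p^n$. Since $p$ is Fourier-supported on $\{0,\pm e_1,\dots,\pm e_d\}$, the $n$-fold convolution $p^n$ is Fourier-supported on $\{m\in\mathbb{Z}^d:|m|\le n\}$; in particular the $x$-th Fourier coefficient $[p^n]_x$ vanishes whenever $|x|>n$. Since the $\ell^1$ mass of the Fourier coefficients of $p$ is exactly $\omega^2+4\sum_j\lambda_j=c_{\omega,\lambda}^2$, submultiplicativity of the $\ell^1$ Fourier norm yields $|[p^n]_x|\le c_{\omega,\lambda}^{2n}$. Feeding these facts into the three series and setting $\tau:=2c_{\omega,\lambda}|t|$ produces
\[
|H^{(0)}_t(x)|\le\sum_{n\ge|x|}\frac{\tau^{2n}}{(2n)!},\qquad
|H^{(-1)}_t(x)|\le c_{\omega,\lambda}^{-1}\sum_{n\ge|x|}\frac{\tau^{2n+1}}{(2n+1)!},\qquad
|H^{(1)}_t(x)|\le c_{\omega,\lambda}\sum_{n\ge|x|-1}\frac{\tau^{2n+1}}{(2n+1)!},
\]
and the extra $e^{\mu/2}$ in the $H^{(1)}$ bound will come out of the one-step shift in the starting index.

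For the final tail estimate I would use the Poisson/Chernoff bound $\sum_{m\ge M}\tau^m/m!\le(e\tau/M)^M$, valid for $M\ge\tau$, applied with $M\in\{2|x|{-}1,\,2|x|,\,2|x|{+}1\}$. Taking logarithms, the target inequality $(e\tau/M)^M\le e^{-\mu M/2+\mu c_{\omega,\lambda}V|t|}$ with $V:=\max(2/\mu,e^{\mu/2+1})$ reduces to the elementary estimate
\[
h(\alpha):=\alpha\left(1+\tfrac{\mu}{2}-\log\alpha\right)\le\tfrac{\mu V}{2}\qquad\text{for every }\alpha:=M/\tau\ge V,
\]
which I would verify as follows. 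The function $h$ has $h'(\alpha)=\mu/2-\log\alpha$, so it is decreasing on $[e^{\mu/2},\infty)$, and in both branches $V\ge e^{\mu/2}$, so it suffices to check the boundary. At $\alpha=V=e^{\mu/2+1}$ one has $h(V)=e^{\mu/2+1}(1+\mu/2-\mu/2-1)=0\le\mu V/2$. When the other branch $V=2/\mu$ is active one has $2/\mu\ge e^{\mu/2+1}\ge e$, hence $\log(2/\mu)\ge 1$ and $h(2/\mu)=1+(2/\mu)(1-\log(2/\mu))\le 1=\mu V/2$. In the complementary regime $|x|<c_{\omega,\lambda}V|t|$ the claimed right-hand side is at least one, and the trivial estimates $|e^{-2i\gamma t}|=1$, $|\sin(2\gamma t)/\gamma|\le 2|t|$, $|\gamma|\le c_{\omega,\lambda}$ deliver the bound, using $\mu V\ge 2$.

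The main technical obstacle is producing precisely the velocity $\max(2/\mu,e^{\mu/2+1})$: the naive factorization $\sum_{n\ge N}\tau^{2n}/(2n)!\le e^{-\mu N}\cosh(e^{\mu/2}\tau)\le e^{-\mu N+e^{\mu/2}\tau}$ only yields the weaker $2c_{\omega,\lambda}e^{\mu/2}/\mu$, which misses the clean $2/\mu$ of the small-$\mu$ branch by a factor $e^{\mu/2}$. It is the Chernoff form $(e\tau/M)^M$ that is strong enough to separate the two regimes and deliver the statement as written.
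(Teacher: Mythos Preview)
The paper does not give a self-contained proof of this lemma; it merely says ``using analysis similar to what is proven in \cite{nachtergaele2009}'' and states the result. Your series-expansion strategy---writing $\cos(2\gamma t)$, $\gamma^{-1}\sin(2\gamma t)$, $\gamma\sin(2\gamma t)$ as power series in the trigonometric polynomial $p=\gamma^2$, then using the finite Fourier support of $p^n$ together with the $\ell^1$ Fourier-coefficient bound $\|\widehat{p^n}\|_1\le c_{\omega,\lambda}^{2n}$---is exactly the method of that reference, so in spirit you are doing the same thing the paper has in mind.

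Your argument is complete and correct for $H^{(0)}_t$ and $H^{(1)}_t$: in the Chernoff regime $M\ge V\tau$ your computation of $h(\alpha)$ is fine, and in the complementary regime $|x|<c_{\omega,\lambda}V|t|$ the trivial bounds $|H^{(0)}_t|\le 1$ and $|H^{(1)}_t|\le c_{\omega,\lambda}$ are indeed dominated by the right-hand sides, which at the boundary equal $1$ and $c_{\omega,\lambda}e^{\mu/2}\cdot e^{-\mu/2}=c_{\omega,\lambda}$ respectively.

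There is, however, a genuine gap for $H^{(-1)}_t$ in the complementary regime. Your trivial estimate gives only $|H^{(-1)}_t(x)|\le 2|t|=\tau/c_{\omega,\lambda}$, whereas near the boundary $|x|\approx c_{\omega,\lambda}V|t|$ the claimed right-hand side is $\approx c_{\omega,\lambda}^{-1}e^{\mu/2}$; for large $\tau$ this fails. The fact that $\mu V\ge 2$ does not rescue it: that only guarantees the exponent $-\mu|x|+\mu V\tau/2$ is positive, not that it exceeds $\log\tau$. Nor can you fall back on the series bound in this range: for $1\le\alpha=M/\tau<V$ your function $h(\alpha)=\alpha(1+\mu/2-\log\alpha)$ attains its maximum $e^{\mu/2}$ at $\alpha=e^{\mu/2}$, and when $V=2/\mu$ one has $\mu V/2=1<e^{\mu/2}$, so the inequality $h(\alpha)\le\mu V/2$ fails there. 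Concretely, with $\mu=0.1$, $V=20$, $\tau=100$, $|x|=52$ (so $M=105$, $\alpha\approx e^{\mu/2}$), Chernoff gives roughly $e^{99.8}$ while the target is $e^{94.8}$. Thus the series/Chernoff route as written cannot deliver the stated constant $c_{\omega,\lambda}^{-1}$ in front of the $H^{(-1)}_t$ bound for small $\mu$; you would need either a larger prefactor, a slightly larger velocity, or an additional argument exploiting cancellation in the alternating series $\sum_n(-1)^n(2t)^{2n+1}[p^n]_x/(2n+1)!$ that goes beyond the crude absolute-value estimate $|[p^n]_x|\le c_{\omega,\lambda}^{2n}$.
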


Given the estimates in Lemma~\ref{lem:htx}, equation (\ref{eq:defft}) and 
Young's inequality imply that $T_t$ can be defined as
a transformation of $\ell^p(\mathbb{Z}^d)$, for $p\geq 1$. However,
the symplectic form limits us to consider $\mathcal{D}=\ell^p(\mathbb{Z}^d)$
with $1\leq p\leq 2$.

The following bound now readily follows:
\begin{equation}
\begin{split}
\vert \mbox{Im} \langle T_t f, g\rangle\vert \leq & \left(1+2 e^{\mu/2}c_{\omega,\lambda} + 2 c_{\omega,\lambda}^{-1}\right) \times \\
& \quad \times \sum_{x,y} \vert f(x)\vert\, \vert g(y)\vert 
e^{-\mu \left( |x| - c_{\omega,\lambda} \max \left( \frac{2}{\mu} \, , \, e^{(\mu/2)+1}\right) |t| \right)}
\end{split}
\end{equation}
This implies an estimate of the form (\ref{eq:dynestex}), and hence a Lieb-Robinson bound as in (\ref{eq:freest}).

A simple corollary of Lemma~\ref{lem:htx} follows.

\begin{cor} \label{cor:hest} Consider the functions defined in (\ref{eq:h}). For $\omega\geq 0, \lambda_1,
\ldots,\lambda_d\geq 0$, but with 
$c_{\omega,\lambda} = (\omega^2 + 4 \sum_{j=1}^d \lambda_j )^{1/2} >0$.
Take $\| \cdot \|_1$ to be the $\ell^1$-norm. One has that 
\begin{equation}
\| H_t^{(0)} - \delta_0 \|_1 \to 0  \quad \mbox{as} \quad t \to 0,
\end{equation}
and
\begin{equation}
\| H_t^{(m)}  \|_1 \to 0  \quad \mbox{as} \quad t \to 0, \quad \mbox{for } m \in \{ -1, 1\}.
\end{equation}
\end{cor}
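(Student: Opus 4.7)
The plan is to prove $\ell^1$ convergence by a standard truncation argument: a uniform-in-$x$ pointwise rate in $|t|$ handles any finite ball of sites, and the exponential decay in Lemma~\ref{lem:htx} controls the tail uniformly for $t$ near $0$.

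First I would identify the pointwise limits at $t=0$. Since $\gamma(k)$ is even in $k$, the sine integrands defining $H_0^{(\pm 1)}(x)$ are odd and integrate to zero, so $H_0^{(\pm 1)}\equiv 0$; and Fourier inversion on $[-\pi,\pi]^d$ gives $H_0^{(0)}(x)=\delta_{x,0}$. Next, each difference $H_t^{(m)}(x)-H_0^{(m)}(x)$ equals, up to taking a real or imaginary part,
\begin{equation*}
\frac{1}{(2\pi)^d}\int_{[-\pi,\pi]^d} f_m(k)\,e^{ik\cdot x}\bigl(e^{-2i\gamma(k)t}-1\bigr)\,dk,
\end{equation*}
with weight $f_m(k)=1,\,1/\gamma(k),\,\gamma(k)$ for $m=0,-1,1$. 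Using $|e^{-2i\gamma(k)t}-1|\le 2\gamma(k)|t|$, the integrand is pointwise bounded by $2|t|\,f_m(k)\gamma(k)$, which is bounded on the compact torus in each case: for $m=-1$ the factor $1/\gamma$ is exactly cancelled by $\gamma$, while for $m=0,1$ the continuity of $\gamma$ (with $c_{\omega,\lambda}>0$) gives $\|\gamma\|_\infty<\infty$. Integrating yields the uniform pointwise estimate
\begin{equation*}
\bigl|H_t^{(m)}(x)-H_0^{(m)}(x)\bigr|\le C_m|t|\quad\text{for all }x\in\mathbb{Z}^d,
\end{equation*}
with $C_m$ depending only on $\omega$ and the $\lambda_j$.

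To finish, fix $\epsilon>0$, pick any $\mu>0$, and apply Lemma~\ref{lem:htx} for $|t|\le 1$ to dominate each $|H_t^{(m)}(x)|$ by a $t$-independent, summable $K_m\,e^{-\mu|x|}$. Since $H_0^{(m)}$ vanishes off the origin, choose $N$ so that $\sum_{|x|>N}K_m e^{-\mu|x|}<\epsilon/2$, then $\delta\in(0,1]$ so that $(2N+1)^d C_m\,\delta<\epsilon/2$. Splitting $\|H_t^{(m)}-H_0^{(m)}\|_1$ at $|x|=N$, the inner sum is bounded by the pointwise rate and the outer by the tail, giving the desired $\ell^1$ bound for $|t|<\delta$. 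The only real subtlety is the $m=-1$ case, where the apparent $1/\gamma$ singularity at $k=0$ in the massless regime $\omega=0$ is killed by the factor $\gamma$ coming from $|e^{-2i\gamma t}-1|$; beyond this, the argument is routine bookkeeping.
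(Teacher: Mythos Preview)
Your argument is correct and follows the same approach as the paper: dominate $|H_t^{(m)}(x)|$ by the exponential bounds of Lemma~\ref{lem:htx} (uniform for $|t|\le 1$) and combine with pointwise convergence at $t=0$. The paper simply invokes dominated convergence in $\ell^1(\mathbb{Z}^d)$, whereas you spell out the $\epsilon$--$N$ truncation and additionally supply a linear-in-$|t|$ pointwise rate; this extra quantitative input is not needed for the statement, but it does make the $m=-1$, $\omega=0$ pointwise limit---which the paper dismisses as ``clear'' despite the $1/\gamma(k)$ singularity---fully transparent.
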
 
\begin{proof}
The estimates in Lemma~\ref{lem:htx} imply that the functions $H_t^{(m)}$ are bounded by exponentially decaying functions (in $|x|$).
These estimates are uniform for $t$ in compact sets, e.g. $t \in [-1,1]$, and therefore dominated convergence applies. It is clear that
$H_0^{(0)}(x) = \delta_0(x)$ while $H_0^{(m)}(x) = 0$ for $m \in \{-1,1\}$. This proves the corollary.
\end{proof}

\subsubsection{Representing the dynamics} The infinite-volume ground state of the model (\ref{eq:harham}) is the vacuum
state for the $b-$operators, as can be seen from (\ref{eq:diagham}). This state
is defined on $\mathcal{W}(\mathcal{D})$ by
\begin{equation}\label{eq:state}
\rho(W(f))=e^{-\frac{1}{4}\Vert (U^*-V^*)f\Vert^2}
\end{equation}
By standard arguments this defines a state on $\mathcal{W}(\mathcal{D})$
\cite{bratteli1997}. Using (\ref{eq:deftt}), (\ref{eq:bog1}) and (\ref{eq:bog2}) one 
readily verifies that $\rho$ is $\tau_t$-invariant. $\rho$ is regular
by observation. The weak continuity of the dynamics in the GNS-representation
of $\rho$ will follow from the continuity of the functions of the form
\begin{equation}\label{eq:weakcont}
t\mapsto \rho(W(g_1)W(T_t f)W(g_2)), \mbox{ for}\ g_1, g_2, f\in \mathcal{D}.
\end{equation}
When $\omega>0$, this continuity can be easily observed from the
following expresion:
\begin{equation} \label{eq:defstate}
\begin{split}
\rho(W(g_1)W(T_t f)W(g_2)) = & e^{i\sigma(g_1, g_2)/2}
e^{i\sigma(T_t f, g_2 - g_1)/2} \times \\
&\times  e^{-\Vert(U^*-V^*)(g_1+g_2+T_t f)\Vert^2/4}
\end{split}
\end{equation}
Note that $T_t$ is differentiable with bounded derivative and that both $U$ and $V$ 
are bounded. This establishes the continuity in the case that $\omega>0$. 

As discussed in the introduction of the section, the $W^*$-dynamical system
is now defined by considering the GNS representation $\pi_\rho$
of $\rho$. This yields a von Neumann algebra 
$\mathcal{M}=\overline{\pi_\rho(\mathcal{W}(\mathcal{D}))}$. The invariance of
$\rho$ implies that the dynamics is implementable by unitaries $U_t$, i.e.,
\begin{equation}
\pi_\rho(\tau_t(W(f)))=U_t^* \pi_\rho(W(f)) U_t\, .
\end{equation}
Using $U_t$, the dynamics can be extended to $\mathcal{M}$. As a consequence of
(\ref{eq:weakcont}), this extended dynamics is weakly continuous.

\subsubsection{The case of $\omega=0$} We now discuss the case $\omega =0$. Here, the maps $T_t$ are defined
using the convolution formula (\ref{eq:defft}). By Lemma \ref{lem:htx},
$T_t$ is well-defined as a transformation of $\ell^p(\mathbb{Z}^d)$, for
$1\leq p\leq 2$. Both the group property of $T_t$ and the invariance of 
the symplectic form $\sigma$ follow in the limit $\omega\to 0$ by
dominated congervence which is justified by Lemma \ref{lem:htx}.
This demonstrates that the dynamics is well defined.

We represent the dynamics in a state $\rho$ is defined by (\ref{eq:state}), but with the understanding
that $\Vert (U^*- V^*)f\Vert$ may take on the value $+\infty$, in which case
$\rho(W(f))=0$. $\rho$ is still clearly regular. It remains to show that the
dynamics is weakly continuous.

Observe that
\begin{equation}
\begin{split}
T_t f - f =  f * \left(H_t^{(0)} - \delta_0 \right) & - f* \left( \frac{i}{2}(H_t^{(-1)} + H_t^{(1)}) \right) \\
& \quad + \overline{f}*\left( \frac{i}{2}(H_t^{(1)} - H_t^{(-1)}) \right),
\end{split}
\end{equation}
follows from (\ref{eq:defft}). Using Young's inequality and Corollary~\ref{cor:hest}, it is clear that $\| T_t f - f \| \to 0$ as 
$t \to 0$ for any $f \in \ell^p( \mathbb{Z}^d)$ with $1 \leq p \leq 2$. A calculation shows that
\begin{equation}
(U^*-V^*)(T_t f - f) = F_1 *  \left(H_t^{(0)} - \delta_0 \right) - F_2 * H_t^{(-1)}  - i F_3 *H_t^{(1)} \, ,
\end{equation}
where
\begin{equation}
\begin{split}
& \quad \quad F_1 =   \, \mathcal{F}^{-1} M_{\sqrt{ \gamma}} \mathcal{F} \mbox{Im}[f] 
 - i \mathcal{F}^{-1} M_{ \gamma^{-1/2}} \mathcal{F} \mbox{Re}[f] \, , \\
 & F_2 =  \mathcal{F}^{-1} M_{\sqrt{ \gamma}} \mathcal{F} \mbox{Re}[f] \, , \quad \mbox{and} \quad
 F_3 = \mathcal{F}^{-1} M_{\gamma^{-1/2}} \mathcal{F} \mbox{Im}[f] \, . 
\end{split}
\end{equation}
A similar argument to what is given above now implies that 
$\| (U^*-V^*)(T_tf-f) \| \to 0$ as $t \to 0$, for any $f \in \mathcal{D}_0$, 
where
\begin{equation}
\mathcal{D}_0 = \left\{ f \in \ell^2( \mathbb{Z}^d) : \mathcal{F}^{-1} M_{\gamma^{-1/2}} \mathcal{F} \mbox{Re}[f] \in \ell^2( \mathbb{Z}^d) \right\} \, .
\end{equation}
No additional assumption on $\mbox{Im}[f]$ is necessary since $F_3$ is
convolved with $H_t^{(1)}$. Given the form of (\ref{eq:defstate}), this suffices to prove weak continuity.
In fact, one can check that $T_t$ leaves $\mathcal{D}_0$ invariant and that if $f \in \mathcal{D}_0$, then  
$(U^*-V^*)T_t f \in \ell^2( \mathbb{Z}^d)$ for all $t \in \mathbb{R}$. 
This establishes weak continuity of the dynamics, defined on $\mathcal{W}( \mathcal{D}_0)$.

\begin{rem}  We observe that, when $\omega=0$, the finite volume Hamiltonian $H_L^h$ (\ref{eq:harham}) is translation invariant and commutes with the total momentum operator $P_0$ (see (\ref{eq:Q+Pk})). In fact, $H_L^h$ can be written as 
\[ \begin{split} H_L^h &= P_0^2 + \sum_{k \in \Lambda_L^* \backslash \{ 0 \}} P_k^* P_k + \gamma^2 (k) Q_k^* Q_k \\ &= P_0^2 + \sum_{k \in \Lambda_L^* \backslash \{ 0 \}} \gamma(k) (2 b_k^* b_k + 1) \end{split} \] where we used the notation (\ref{eq:Q+Pk}) and, for $k \not = 0$, we introduced the operators $b_k, b_k^*$ as in (\ref{eq:beqns}). 
In this case, the operator $H_L^h$ does not have eigenvectors; its spectrum is purely continuous. By a unitary transformation, the Hilbert space $\cH_{\Lambda_L}$ (see (\ref{eq:hspace})) can be mapped into the space $L^2 (\bR, \rd P_0 ; \cH_b)$ of square integrable functions of $P_0 \in \bR$, with values in $\cH_b$. Here, $\cH_b$ denotes the Fock space generated by all creation and annihilation operators $b_k^*, b_k$ with $k \not = 0$. It is then easy to construct vectors which minimize the energy by a given distribution of the total momentum; for an arbitrary (complex valued) $f \in L^2 (\bR)$ with $\| f \|=1$, we define $\psi_f \in L^2 (\bR, \rd P_0 ; \cH_b)$ by setting $\psi_f (P_0) = f (P_0) \Omega$ (where $\Omega$ is the Fock vacuum in $\cH_b$). These vectors are not invariant with respect to the time evolution. It is simple to check that the Schr\"odinger evolution of $\psi_f$ is given by $e^{-iH_{L}^h t} \psi_f = \psi_{f_t}$ with $f_t (P_0) = e^{-it P_0^2} f (P_0)$ is the free evolution of $f$. In particular, for $\omega =0$, $H_L^h$ does not have a ground state in the traditional sense of an eigenvector. 
For this reason, when $\omega =0$, it is not a priori clear what the natural choice of state should be. As is discussed above, one possibility is to consider first $\omega \not = 0$ and then take the limit $\omega \to 0$. This yields a ground state for the infinite system 
with vanishing center of mass momentum of the oscillators. By considering non-zero values
for the center of mass momentum, one can also define other states with similar properties.
\end{rem}

\subsubsection{Some final comments} The analysis in the following sections and our main result is not limited
to the class of examples we discussed above. E.g., harmonic systems defined on 
more general graphs, such as the ones considered in \cite{eisert2005,eisert2008} 
can also be treated. Also note that our choice of time-invariant
state, while natural, is by no means the only possible. Instead of the
vacuum state defined in (\ref{eq:state}), equilibrium states at positive temperatures could be used in exactly the same way. It would also make
sense to study the convergence of the equilibrium or ground states
for the perturbed dynamics and to consider the dynamics in the representation of the limiting infinite-system state, but we have
not studied this situation and will not discuss it in this paper.

%
%
%
%
%

\section{Perturbing the Harmonic Dynamics}

In this section, we will discuss finite volume perturbations of the infinite volume 
harmonic dynamics which we defined in Section~\ref{sec:harm}. To begin, we recall a
fundamental result about perturbations of quantum dynamics defined by adding a bounded term
to the generator. This is a version of what is usually known as the Dyson or Duhamel 
expansion. The following statement summarizes Proposition 5.4.1 of \cite{bratteli1997}.
\begin{prop} \label{prop:perdyn} Let $\{ \mathcal{M}, \alpha_t \}$ be a $W^*$-dynamical system and let $\delta$ denote the
infinitesimal generator of $\alpha_t$. Given any $P = P^* \in \mathcal{M}$, set $\delta_P$ to be the bounded 
derivation with domain $D( \delta_P) = \mathcal{M}$ satisfying $\delta_P(A) = i [P, A]$ for all $A \in \mathcal{M}$.
It follows that $\delta + \delta_P$ generates a one-parameter group of $*$-automorphisms $\alpha^P$ of $\mathcal{M}$
which is the unique solution of the integral equation
\begin{equation} \label{eq:defpdyn}
\alpha_t^P(A) = \alpha_t(A) + i \int_0^t \alpha_s^P \left( \left[ P, \alpha_{t-s}(A) \right] \right) \, ds \, . 
\end{equation}
In addition, the estimate
\begin{equation} \label{eq:dynestpert}
\left\| \alpha_t^P(A) - \alpha_t(A)  \right\| \leq \left( e^{|t| \| P \|} - 1 \right) \, \| A \| \, 
\end{equation}
holds for all $t \in \mathbb{R}$ and $A \in \mathcal{M}$.
\end{prop}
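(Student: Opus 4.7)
The plan is to construct $\alpha_t^P$ explicitly via a Dyson expansion, exhibit it as conjugation by a unitary cocycle, and then read off all the stated properties, with uniqueness coming from a Gronwall argument applied to the integral equation. First I would set up Picard iteration for (\ref{eq:defpdyn}): $\alpha_t^{(0)}(A) := \alpha_t(A)$ and $\alpha_t^{(n+1)}(A) := \alpha_t(A) + i \int_0^t \alpha_s^{(n)}([P, \alpha_{t-s}(A)])\,ds$, with the integrals interpreted in the weak operator topology (evaluated against normal functionals on $\mathcal{M}$). Telescoping produces a series whose $n$-th term is an $n$-fold nested commutator integrated over the simplex $\{0 \leq s_1 \leq \cdots \leq s_n \leq t\}$; since each nested commutator with $P$ costs a factor of at most $2\|P\|$ in norm and the simplex has volume $|t|^n/n!$, the series converges in norm uniformly on compact time intervals. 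The limit $\alpha_t^P(A)$ lies in $\mathcal{M}$ by weak closure, satisfies (\ref{eq:defpdyn}) by construction, and inherits weak continuity in $t$ from $\alpha_t$.

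To recognize $\alpha_t^P$ as a group of $*$-automorphisms and identify its generator, I would introduce the unitary cocycle $\Gamma_t^P \in \mathcal{M}$ solving $\frac{\rd}{\rd t}\Gamma_t^P = i \alpha_t(P) \Gamma_t^P$ with $\Gamma_0^P = \idty$, constructed by an analogous Dyson series. Self-adjointness of $\alpha_t(P)$ gives unitarity, and membership in $\mathcal{M}$ follows from weak closedness. A direct calculation confirms the intertwining formula $\alpha_t^P(A) = \Gamma_t^P\, \alpha_t(A)\, (\Gamma_t^P)^*$; this makes the $*$-automorphism property manifest, while the cocycle identity $\Gamma_{s+t}^P = \Gamma_s^P\, \alpha_s(\Gamma_t^P)$ yields the group law $\alpha_{s+t}^P = \alpha_s^P \circ \alpha_t^P$. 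Differentiating at $t = 0$ identifies the infinitesimal generator as $\delta + \delta_P$.

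Uniqueness is immediate: if $\tilde\alpha_t$ also solves (\ref{eq:defpdyn}), subtracting yields $\|\alpha_t^P(A) - \tilde\alpha_t(A)\| \leq 2\|P\| \int_0^t \|\alpha_s^P(A) - \tilde\alpha_s(A)\|\,ds$, so Gronwall forces equality. The estimate (\ref{eq:dynestpert}) falls out of the Dyson series term-by-term (the $n$-th term is bounded by $(2|t|\|P\|)^n \|A\|/n!$), or equivalently from the cocycle representation combined with the bound $\|\Gamma_t^P - \idty\| \leq e^{|t|\|P\|} - 1$ obtained by Gronwall applied to the ODE for $\Gamma_t^P$. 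The main subtlety will be keeping careful track of topology: since $\alpha_t$ is only weakly continuous, the Dyson integrals must be defined weakly rather than as Bochner norm integrals, and one must verify at each stage that the result remains in the weakly closed algebra $\mathcal{M}$ rather than merely in $\mathcal{L}(\mathcal{H}_\rho)$. This is precisely the point where the $W^*$-structure of $\mathcal{M}$ is essential.
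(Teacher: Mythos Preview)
The paper does not actually prove this proposition; it simply quotes it as Proposition~5.4.1 of Bratteli--Robinson, Vol.~2. Your Dyson series / unitary cocycle / Gronwall strategy is precisely the standard proof found there, so your approach is correct and coincides with the textbook argument the paper is citing.

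One quantitative slip worth flagging: your own term-by-term Dyson bound $(2|t|\|P\|)^n\|A\|/n!$ sums to $(e^{2|t|\|P\|}-1)\|A\|$, not the constant appearing in (\ref{eq:dynestpert}); and the cocycle estimate $\|\Gamma_t^P-\idty\|\leq e^{|t|\|P\|}-1$ does not by itself yield the sharper constant either, since conjugation involves $\Gamma_t^P$ on both sides of $\alpha_t(A)$. In fact the constant in (\ref{eq:dynestpert}) as transcribed seems to be off by the usual factor of $2$ in $\|\delta_P\|\leq 2\|P\|$: already for $\mathcal{M}=M_2(\mathbb{C})$, $\alpha_t=\mathrm{id}$, $P=\sigma_z$, $A=\sigma_x$ one computes $\|\alpha_t^P(A)-A\|=2|\sin t|\sim 2|t|$, which exceeds $e^{|t|}-1\sim|t|$ for small $t$. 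So your bound is the correct one; it simply does not match the constant printed in the paper, which is in any case immaterial for the applications in Sections~4 and~5.
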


Since the initial dynamics $\alpha_t$ is assumed weakly continuous, the norm estimate (\ref{eq:dynestpert}) can be used to
show that the perturbed dynamics is also weakly continuous. Hence, for each $P = P^* \in \mathcal{M}$ the pair 
$\{ \mathcal{M}, \alpha_t^P \}$ is also a $W^*$-dynamical system. Thus, if $P_i=P_i^* \in \mathcal{M}$ 
for $i=1,2$, then one can define $\alpha_t^{P_1+P_2}$ iteratively.


\subsection{A Lieb-Robinson bound for on-site perturbations} \label{subsec:onsite}

In this section we will consider perturbations of the harmonic dynamics 
defined in Section~\ref{sec:harm}. Recall that our general assumptions for the harmonic
dynamics on $\Gamma$ are as follows.

We assume that the harmonic dynamics, $\tau^0_t$, is defined on a Weyl algebra
$\mathcal{W}( \mathcal{D})$ where $\mathcal{D}$ is a subspace of $\ell^2(\Gamma)$.
In fact, we assume there exists a group $T_t$ of real-linear
transformations which leave $\mathcal{D}$ invariant and satisfy
\begin{equation}
\tau_t^0(W(f)) = W(T_tf) \quad \mbox{for all } f \in \mathcal{D} \, .
\end{equation}
In addition, we assume that this harmonic dynamics satisfies a Lieb-Robinson 
bound. Specifically, we suppose that there exists a number $a_0 >0$ for which
given any $0<a \leq a_0$, there are positive numbers $c_a$ and $v_a$ for which
\begin{equation} \label{eq:prelrb}
\left| 1 - e^{ i \sigma(T_tf, g)} \right| \leq c_a e^{v_a|t|} \sum_{x,y \in \Gamma} |f(x)| \, |g(y)| \, F_a \left( d(x,y) \right) \,
\end{equation}
here the spatial decay in $\Gamma$ is described by the function $F_a$ as introduced in
Section~\ref{sec:bdints}. As we discussed in Section~\ref{sec:harm}, the estimate (\ref{eq:prelrb})
immediately implies the Lieb-Robinson bound
\begin{equation} \label{eq:freelrb}
\left\| \left[ \tau_t^0 \left( W(f) \right), W(g) \right] \right\| \leq c_a e^{v_a|t|} \sum_{x,y \in \Gamma} |f(x)| \, |g(y)| \, F_a \left( d(x,y) \right) \, .
\end{equation}
Finally, we assume that we have represented this harmonic dynamics in a regular and
$\tau_t^0$-invariant state $\rho$ for which the pair $\{ \mathcal{M}, \tau_t^0 \}$, with
$\mathcal{M} = \overline{\pi_{\rho}(\mathcal{W}( \mathcal{D}))}$, is a $W^*$-dynamical system.

Our first estimate involves perturbations defined as finite sums of on-site terms.
More specifically, the perturbations we consider are defined as follows.

To each site $x \in \Gamma$, we will associate a finite measure $\mu_x$ on $\mathbb{C}$, and
an element $P_x \in \mathcal{W}(\mathcal{D})$ which has the form
\begin{equation} \label{eq:defpx}
P_x = \int_{\mathbb{C}} W(z \delta_x) \mu_x(dz) \, .
\end{equation}
We require that each $\mu_x$ is even, i.e. invariant under $z \mapsto -z$, to ensure 
self-adjointness, i.e. $P_x^* = P_x$. Our Lieb-Robinson bounds hold 
under the additional assumption that the second moment
is uniformly bounded, i.e. 
\begin{equation} \label{eq:totbd}
\sup_{x \in \Gamma} \int_{\mathbb{C}} |z|^2 \, | \mu_x |( dz ) < \infty \, .
\end{equation}

We use Proposition~\ref{prop:perdyn} to define the perturbed dynamics.
Fix a finite set $\Lambda \subset \Gamma$. Set
\begin{equation} \label{eq:defpl}
P^{\Lambda} = \sum_{x \in \Lambda} P_x \, ,
\end{equation}
and note that $(P^{\Lambda})^* = P^{\Lambda} \in \mathcal{W}(\mathcal{D})$. 
We will denote by $\tau_t^{(\Lambda)}$ the dynamics that results from 
applying Proposition~\ref{prop:perdyn} to the $W^*$-dynamical system $\{ \mathcal{M}, \tau_t^0 \}$
and $P^{\Lambda}$.

Before we begin the proof of our estimate, we discuss two examples.
\begin{ex} 1) Let $\mu_x$ be supported on $[- \pi, \pi)$ and absolutely 
continuous with respect to Lebesgue measure, i.e. 
$\mu_x(dz) = v_x(z) dz$. If $v_x$ is in $L^2([-\pi,\pi))$, then $P_x$ is proportional to an operator of multiplication by the 
inverse Fourier transform of $v_x$. Moreover, since the support of $\mu_x$ is real, $P_x$ corresponds to 
multiplication by a function depending only on $q_x$.
\end{ex}
\begin{ex}
2) Let $\mu_x$ have finite support, e.g., take $\mbox{supp}( \mu_x) = \{ z, -z \}$ for some number
$z = \alpha +  i \beta \in \mathbb{C}$. Then
\begin{equation}
P_x = W( z \delta_x) + W(- z \delta_x) = 2 \cos( \alpha q_x + \beta p_x) \, .
\end{equation}
\end{ex}

We now state our first result.

\begin{thm} \label{thm:ahlrb}
Let $\tau_t^0$ be a harmonic dynamics defined on $\Gamma$ as described above. Suppose that
\begin{equation}
\kappa = \sup_{x \in \Gamma} \int_{\mathbb{C}} |z|^2 | \mu_x| (d z) < \infty \, ,
\end{equation}
and define the perturbed dynamics $\tau_t^{(\Lambda)}$ as indicated above.
For every $0<a \leq a_0$, there exist positive numbers $c_a$ and $v_a$ for which
the estimate
\begin{equation} \label{eq:anharmbd}
\left\| \left[ \tau_t^{(\Lambda)} \left( W(f) \right), W(g) \right] \right\| \leq c_a e^{ (v_a + c_a \kappa C_a ) |t|} \sum_{x, y} |f(x)| \, |g(y)| F_a \left( d(x,y) \right)
\end{equation}
holds for all $t \in \mathbb{R}$ and for any functions $f, g \in \mathcal{D}$. 
\end{thm}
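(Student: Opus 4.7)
The plan is a Dyson expansion of $\tau_t^{(\Lambda)}$ around the harmonic dynamics $\tau_t^0$, organized so that the free Lieb--Robinson bound \eqref{eq:prelrb} and the Weyl commutation relation \eqref{eq:weylrel} together close the recursion.

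Starting from Proposition~\ref{prop:perdyn} with $A=W(f)$ and commuting with $W(g)$ gives
\[
[\tau_t^{(\Lambda)}(W(f)),W(g)] = [\tau_t^0(W(f)),W(g)] + i\int_0^t \bigl[\tau_s^{(\Lambda)}\bigl([P^{\Lambda},W(T_{t-s}f)]\bigr), W(g)\bigr]\,ds.
\]
Using the evenness of each $\mu_x$, the Weyl relations produce the closed form
\[
[P_x,W(T_{t-s}f)] = -2i\int_{\mathbb C}\sin\!\Bigl(\tfrac12\sigma(z\delta_x,T_{t-s}f)\Bigr)\,W(z\delta_x+T_{t-s}f)\,\mu_x(dz),
\]
and the free Lieb--Robinson hypothesis \eqref{eq:prelrb}, applied to the pair $(T_{t-s}f,\,z\delta_x)$, bounds the sine factor by $\tfrac{c_a}{2}e^{v_a|t-s|}|z|\sum_y|f(y)|F_a(d(x,y))$, contributing one power of $|z|$ to the integrand.

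Next, factoring $W(z\delta_x+T_{t-s}f)=e^{i\theta/2}W(z\delta_x)W(T_{t-s}f)$ and using $[AB,W(g)]=A[B,W(g)]+[A,W(g)]B$ together with the unitarity of Weyl operators yields
\[
\bigl\|[\tau_s^{(\Lambda)}(W(z\delta_x+T_{t-s}f)),W(g)]\bigr\| \le \bigl\|[\tau_s^{(\Lambda)}(W(T_{t-s}f)),W(g)]\bigr\| + \bigl\|[\tau_s^{(\Lambda)}(W(z\delta_x)),W(g)]\bigr\|.
\]
On the first branch the identity $\tau_s^0(W(T_{t-s}f))=W(T_tf)$ keeps the iteration closed on the quantity $\Psi(s,t,f):=\|[\tau_s^{(\Lambda)}(W(T_{t-s}f)),W(g)]\|$, whose Duhamel base is always the free commutator $[W(T_tf),W(g)]$, controlled by $c_ae^{v_a|t|}S_a(f,g)$ with $S_a(f,g):=\sum_{x,y}|f(x)||g(y)|F_a(d(x,y))$. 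On the second branch the same recursion applies, and the key point is that the Duhamel base of $\|[\tau_s^{(\Lambda)}(W(z\delta_x)),W(g)]\|$ equals $|1-e^{i\sigma(T_sz\delta_x,g)}|$, which by \eqref{eq:prelrb} contributes a second factor of $|z|$. The pairing of this $|z|$ with the one already produced by the sine factor yields an overall $|z|^2$, precisely what is needed for $\int|z|^2|\mu_x|(dz)\le\kappa$ to absorb the $z$-integration, while the subsequent sum over $x$ collapses via the convolution inequality \eqref{eq:intlat}: $\sum_xF_a(d(x,y))F_a(d(x,w))\le C_aF_a(d(y,w))$.

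Iterating this procedure and tracking the time ordering of the nested integrals, the $n$-th order term is controlled by $c_ae^{v_a|t|}S_a(f,g)\,(c_a\kappa C_a|t|)^n/n!$, and summation in $n$ produces the exponential $e^{c_a\kappa C_a|t|}$ that completes \eqref{eq:anharmbd}. The principal obstacle is arranging the iteration so that only the second moment $\kappa$ appears in the final estimate: each sine factor individually contributes only $|z|^1$, and only by systematically pairing a sine factor with a termination through the $W(z\delta_x)$ Leibniz branch does one obtain the $|z|^2$ needed to avoid a first moment of $|\mu_x|$, which is not uniformly bounded under the hypotheses.
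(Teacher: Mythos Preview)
Your Dyson/Leibniz scheme does not close using only the second moment $\kappa$, and the final paragraph of your proposal identifies the obstacle without resolving it. After the split
\[
\bigl\|[\tau_s^{(\Lambda)}(W(z\delta_x+T_{t-s}f)),W(g)]\bigr\|
\le
\bigl\|[\tau_s^{(\Lambda)}(W(T_{t-s}f)),W(g)]\bigr\|
+
\bigl\|[\tau_s^{(\Lambda)}(W(z\delta_x)),W(g)]\bigr\|,
\]
the first branch is independent of $z$, so the $z$-integration over that branch produces the bare first moment
\(
\int_{\mathbb C}|z|\,|\mu_x|(dz)
\)
multiplied by $\Psi(s,t,f)$. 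There is no ``systematic pairing'' available: the sine factor carries the only $|z|$ present on that branch, and the recursion in $s$ for $\Psi(s,t,f)$ then contains a Gronwall kernel proportional to $\sum_{x}\int|z|\,|\mu_x|(dz)\,\big(\cdots\big)$, which the hypotheses do not bound. Consequently the claimed estimate $(c_a\kappa C_a|t|)^n/n!$ for the $n$-th order term does not follow from your argument.

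The paper's proof handles exactly this point by a different mechanism. Instead of estimating in norm immediately, it differentiates the operator
\(
\Psi_t(s)=[\tau_s^{(\Lambda)}(\tau_{t-s}^0(W(f))),W(g)]
\)
and writes
\(
[P_x,W(T_{t-s}f)]=W(T_{t-s}f)\,\mathcal L_{t-s;x}(f)
\)
with $\mathcal L_{t-s;x}(f)$ \emph{self-adjoint}. The derivative then has the form
\(
\Psi_t(s)\cdot i\sum_x\tau_s^{(\Lambda)}(\mathcal L_{t-s;x}(f))+(\text{source})
\),
and the first summand is removed exactly, at the operator level, by right-multiplying with the unitary $U_t(s)$ solving $U_t'(s)=-i\sum_x\tau_s^{(\Lambda)}(\mathcal L_{t-s;x}(f))\,U_t(s)$. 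Only the source term survives, and it involves $[\tau_s^{(\Lambda)}(W(z\delta_x)),W(g)]$ alone; iterating that gives a second $|z|$ and hence the factor $\kappa$. The essential idea you are missing is this norm-preserving/unitary absorption step, which replaces your first Leibniz branch rather than bounding it.
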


Here the numbers $c_a$ and $v_a$ are as in (\ref{eq:prelrb}), whereas $C_a$ is the convolution constant
as defined in (\ref{eq:intlat}) with respect to the function $F_a$.
\begin{proof}
Fix $t >0$ and define the function $\Psi_t : [0,t] \to \mathcal{W}(\mathcal{D})$ by setting
\begin{equation} \label{eq:defpsit}
\Psi_t(s) = \left[  \tau_s^{(\Lambda)} \left( \tau_{t-s}^0(W(f)) \right), W(g) \right] \, .
\end{equation}
It is clear that $\Psi_t$ interpolates between the commutator associated with the original harmonic dynamics, $\tau_t^0$ at $s=0$, 
and that of the perturbed dynamics, $\tau_t^{(\Lambda)}$ at $s=t$. A calculation shows that
\begin{equation} \label{eq:dpsit}
\frac{d}{ds} \Psi_t(s) = i \sum_{x \in \Lambda} \left[ \, \tau_s^{(\Lambda)} \left( \left[ P_x, W ( T_{t-s}f) \right] \right) , W(g) \right] \, ,
\end{equation}
where differentiability is guaranteed by the results of Proposition~\ref{prop:perdyn}.
The inner commutator can be expressed as
\begin{eqnarray}
 \left[ P_x, W ( T_{t-s}f) \right] & = & \int_{\mathbb{C}} \left[ W(z \delta_x), W( T_{t-s}f) \right] \mu_x( dz) \nonumber \\
 & = & W( T_{t-s}f) \mathcal{L}_{t-s;x}(f)  \, .
\end{eqnarray}
where
\begin{equation} \label{eq:defLx}
\mathcal{L}^*_{t-s;x}(f) = \mathcal{L}_{t-s;x}(f) =  \int_{\mathbb{C}} W(z \delta_x) \left\{e^{i \sigma( T_{t-s}f, z \delta_x)} -1 \right\} \mu_x(dz) \, \in \mathcal{W}(\mathcal{D}) \, .
\end{equation}
Thus $\Psi_t$ satisfies
\begin{equation}
\begin{split}
\frac{d}{ds} \Psi_t(s) =   i \sum_{x \in \Lambda} & \Psi_t(s)  \tau_s^{(\Lambda)} \left( \mathcal{L}_{t-s;x}(f) \right)  \\ 
+ & i \sum_{x \in \Lambda} \tau_s^{(\Lambda)} \left( W( T_{t-s}f) \right) \, \left[ \tau_s^{(\Lambda)} \left( \mathcal{L}_{t-s;x}(f) \right), W(g) \right]  \, .
\end{split}
\end{equation}
The first term above is norm preserving. In fact, define a unitary evolution $U_t(\cdot)$ by setting
\begin{equation}
\frac{d}{ds}U_t(s) = - i \sum_{x \in \Lambda} \tau_s^{(\Lambda)} \left( \mathcal{L}_{t-s;x}(f) \right) U_t(s) \quad \mbox{with } U_t(0) = \idty \, .
\end{equation}
It is easy to see that
\begin{equation}
\frac{d}{ds} \left( \Psi_t(s) U_t(s) \right) =  i \sum_{x \in \Lambda} \tau_s^{(\Lambda)} \left( W( T_{t-s}f) \right) \, \left[ \tau_s^{(\Lambda)} \left( \mathcal{L}_{t-s;x}(f) \right), W(g) \right] U_t(s) \, ,
\end{equation}
and therefore,
\begin{equation}
\Psi_t(t) U_t(t) = \Psi_t(0) + i \sum_{x \in \Lambda} \int_0^t  \tau_s^{(\Lambda)} \left( W( T_{t-s}f) \right) \, \left[ \tau_s^{(\Lambda)} \left( \mathcal{L}_{t-s;x}(f) \right), W(g) \right] U_t(s) \, ds \, .
\end{equation}
Estimating in norm, we find that
\begin{equation} \label{eq:norm1}
\begin{split}
\Big\| \Big[ \tau_t^{(\Lambda)} \left(W (f)\right)  , W(g) \Big] \Big\|  \leq &  \Big\| \Big[ \tau_t^0 \left(W (f)\right) , W(g) \Big] \Big\| \\
& +  \sum_{x \in \Lambda} \int_0^t  \Big\| \left[  \tau_s^{(\Lambda)} \left( \mathcal{L}_{t-s;x}(f) \right) , W(g) \right] \Big\| \, ds \, .
\end{split}
\end{equation}
Moreover, using (\ref{eq:defLx}) and the bound (\ref{eq:prelrb}), it is clear that
\begin{equation} \label{eq:norm2}
\begin{split}
 \Big\| \left[  \tau_s^{(\Lambda)} \left( \mathcal{L}_{t-s;x}(f) \right) , W(g) \right] \Big\| \leq & \, c_a e^{v_a(t-s)} \sum_{x' \in \Gamma} |f(x')| F_a \left( d(x,x') \right) \times \\
& \quad \times \int_{\mathbb{C}} |z|  \, \Big\| \left[  \tau_s^{(\Lambda)} \left( W(z \delta_x) \right) , W(g) \right] \Big\| \, |\mu_x|(dz)
\end{split}
\end{equation}
holds. Combining (\ref{eq:norm2}), (\ref{eq:norm1}), and (\ref{eq:freelrb}), we have proven that
\begin{equation}\label{eq:norm3}
\begin{split}
\Big\| \Big[ \tau_t^{(\Lambda)} \left(W (f)\right)  , W(g) \Big] \Big\| \leq \; & c_a e^{v_a t} \sum_{x, y} |f(x)| \, |g(y)| \, F_a \left( d(x,y) \right) \\ 
&+ c_a \sum_{x' \in \Gamma} |f(x')| \sum_{x \in \Lambda} F_a \left( d(x,x') \right) \int_0^t  e^{v_a (t-s)}  \times \\
&\quad  \times \int_{\mathbb{C}} |z| \, \Big\| \left[  \tau_s^{(\Lambda)} \left( W(z \delta_x) \right) , W(g) \right] \Big\| \, |\mu_x|(dz) \, ds \,.
\end{split}
\end{equation}
Following the iteration scheme applied in \cite{nachtergaele2009}, one arrives at (\ref{eq:anharmbd}) as claimed.
\end{proof}


\subsection{Multiple Site Anharmonicities} \label{subsec:anharmms}

In this section, we will prove that Lieb-Robinson bounds, similar to those in
Theorem~\ref{thm:ahlrb}, also hold for perturbations involving short range interations.
We introduce these as follows.

For each finite subset $X \subset \Gamma$, we associate a finite measure $\mu_X$ on
$\mathbb{C}^X$ and an element $P_X \in \mathcal{W}(\mathcal{D})$ with the form
\begin{equation} \label{eq:PX}
P_X = \int_{\mathbb{C}^X} W( z \cdot \delta_X) \, \mu_X(d z) \, ,
\end{equation}
where, for each $z \in \mathbb{C}^X$, the function $z \cdot \delta_X : \Gamma \to \mathbb{C}$ is
given by
\begin{equation}
(z \cdot \delta_X)(x)  = \sum_{x' \in X} z_{x'} \delta_{x'}(x) = \left\{ \begin{array}{cc} z_x & \mbox{if } x \in X , \\ 0 & \mbox{otherwise.} \end{array} \right.
\end{equation}
We will again require that $\mu_X$ is invariant with respect to $z \mapsto -z$, and hence, $P_X$ is
self-adjoint. In analogy to (\ref{eq:defpl}), for any finite subset $\Lambda \subset \Gamma$, we will set
\begin{equation} \label{eq:defpl2}
P^{\Lambda} = \sum_{X \subset \Lambda} P_X\, ,
\end{equation}
where the sum is over all subsets of $\Lambda$.
Here we will again let $\tau^{(\Lambda)}_t$ denote the dynamics resulting from Proposition~\ref{prop:perdyn}
applied to the $W^*$-dynamical system $\{ \mathcal{M}, \tau_t^0 \}$ and the perturbation $P^{\Lambda}$ defined by (\ref{eq:defpl2}). 

The main assumption on these multi-site perturbations follows.
There exists a number $a_1 >0$ such that for all $0< a \leq a_1$, there is a number $\kappa_a >0$ for which
given any pair $x_1, x_2 \in \Gamma$, 
\begin{equation} \label{eq:pertbd}
\sum_{\stackrel{X \subset \Gamma:}{x_1, x_2 \in X}} \int_{\mathbb{C}^X} |z_{x_1}| | z_{x_2}| \big| \mu_X \big|(dz) \leq \kappa_a F_a \left( d(x_1,x_2) \right) \, .
\end{equation} 

\begin{thm} \label{thm:ahlrbms} Let $\tau_t^0$ be a harmonic dynamics defined on $\Gamma$.
Assmue that (\ref{eq:pertbd}) holds, and that $\tau_t^{(\Lambda)}$ denotes the corresponding
perturbed dynamics. For every $0< a \leq \min(a_0, a_1)$, there exist positive numbers $c_a$ and $v_a$ for which
the estimate
\begin{equation} \label{eq:anharmbdms}
\left\| \left[ \tau_t^{(\Lambda)} \left( W(f) \right), W(g) \right] \right\| \leq c_a e^{ (v_a +  c_a \kappa_a C_a^2 ) |t|} \sum_{x, y} |f(x)| \, |g(y)| F_a \left( d(x,y) \right)
\end{equation}
holds for all $t \in \mathbb{R}$ and for any functions $f, g \in \mathcal{D}$. 
\end{thm}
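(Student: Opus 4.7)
The plan is to mimic the interpolation-and-iterate argument used for Theorem~\ref{thm:ahlrb}, with the main new work consisting in handling the extra spatial sum produced by the multi-site support of each $P_X$. I would fix $t>0$ and $f,g\in\mathcal{D}$, and introduce the interpolating function
\begin{equation*}
\Psi_t(s) = \left[ \tau_s^{(\Lambda)} \bigl( \tau_{t-s}^0 (W(f)) \bigr), W(g) \right], \qquad s \in [0,t],
\end{equation*}
which joins $[\tau_t^0(W(f)), W(g)]$ at $s=0$ and $[\tau_t^{(\Lambda)}(W(f)), W(g)]$ at $s=t$. Differentiating $\Psi_t$ and applying the Weyl relation (\ref{eq:weylrel}) as in (\ref{eq:defLx}) factorises each inner commutator as $[P_X, W(T_{t-s}f)] = W(T_{t-s}f)\, \mathcal{L}_{t-s;X}(f)$, where
\begin{equation*}
\mathcal{L}_{t-s;X}(f) = \int_{\mathbb{C}^X} W(z\cdot\delta_X) \bigl( e^{i\sigma(T_{t-s}f,\, z\cdot\delta_X)} - 1 \bigr) \mu_X(dz) \,\in\, \mathcal{W}(\mathcal{D}) \, .
\end{equation*}

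Next I would absorb the norm-preserving contribution via the same unitary cocycle trick as in the on-site case, obtaining
\begin{equation*}
\bigl\| [\tau_t^{(\Lambda)}(W(f)), W(g)] \bigr\| \,\leq\, \bigl\| [\tau_t^0(W(f)), W(g)] \bigr\| \,+\, \sum_{X \subset \Lambda} \int_0^t \bigl\| [\tau_s^{(\Lambda)}(\mathcal{L}_{t-s;X}(f)), W(g)] \bigr\| \, ds \, .
\end{equation*}
Bounding the scalar factor $|e^{i\sigma(T_{t-s}f, z\cdot\delta_X)} - 1|$ by the harmonic estimate (\ref{eq:prelrb}) and pulling the Weyl operator $W(z\cdot\delta_X)$ out of the remaining commutator dominates the integrand by
\begin{equation*}
c_a e^{v_a(t-s)} \sum_{x' \in \Gamma,\, x'' \in X} |f(x')|\, F_a(d(x',x'')) \int_{\mathbb{C}^X} |z_{x''}|\, \bigl\| [\tau_s^{(\Lambda)}(W(z\cdot\delta_X)), W(g)] \bigr\|\, |\mu_X|(dz) \, .
\end{equation*}

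The main technical hurdle, which I expect to be the most delicate step, is to control the combined sum-integral $\sum_{X \subset \Lambda} \int_{\mathbb{C}^X}$ once the inner commutator is expanded inductively. At leading order the harmonic bound (\ref{eq:freelrb}) furnishes a second factor $|z_{x'''}|$ weighted by $F_a(d(x''',y))$; then the hypothesis (\ref{eq:pertbd}) replaces $\sum_{X \ni x'', x'''} \int |z_{x''}||z_{x'''}|\,|\mu_X|(dz)$ by $\kappa_a F_a(d(x'',x'''))$, leaving the triple product $F_a(d(x',x''))F_a(d(x'',x'''))F_a(d(x''',y))$. Two successive applications of the convolution bound (\ref{eq:intlat}) collapse this into $C_a^2\, F_a(d(x',y))$, which is the source of the $C_a^2$ in (\ref{eq:anharmbdms}), as opposed to the single $C_a$ obtained in the on-site case of Theorem~\ref{thm:ahlrb}.

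Finally, I would close the estimate by iterating the resulting inequality using the scheme employed at the end of the proof of Theorem~\ref{thm:ahlrb} (originally from \cite{nachtergaele2009}). A Grönwall-type calculation verifies that the self-consistent ansatz $\|[\tau_t^{(\Lambda)}(W(f)), W(g)]\| \leq c_a e^{(v_a + c_a \kappa_a C_a^2)|t|} \sum |f(x)||g(y)| F_a(d(x,y))$ is preserved under one step of the iteration, and hence yields exactly the estimate (\ref{eq:anharmbdms}). The restriction $0<a\leq\min(a_0,a_1)$ ensures that both the harmonic Lieb-Robinson bound (\ref{eq:prelrb}) and the perturbation summability (\ref{eq:pertbd}) are simultaneously in force.
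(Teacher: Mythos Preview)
Your proposal is correct and follows essentially the same route as the paper's proof: the same interpolating function $\Psi_t$, the same factorisation $[P_X,W(T_{t-s}f)]=W(T_{t-s}f)\mathcal{L}_{t-s;X}(f)$, the same unitary-cocycle argument leading to the integral inequality, and the same iteration in which (\ref{eq:pertbd}) together with two applications of (\ref{eq:intlat}) produce the factor $\kappa_a C_a^2$ in the exponent. The paper presents exactly the first-iteration computation you describe and then remarks that higher iterates are handled analogously.
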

The proof of this result closely follows that of Theorem~\ref{thm:ahlrb}, and so we only comment on
the differences.
\begin{proof}
For $f,g \in \mathcal{D}$ and $t >0$, define $\Psi_t :[0,t] \to \mathcal{W}(\mathcal{D})$ as in (\ref{eq:defpsit}).
The derivative calculation beginning with (\ref{eq:dpsit}) proceeds as before. Here
\begin{equation} \label{eq:deflzms}
\mathcal{L}_{t-s;X}(f) = \int_{\mathbb{C}^X} W( z \cdot \delta_X) \left\{ e^{i \sigma(T_{t-s}f, z \cdot \delta_X) } - 1 \right\} \, \mu_X( d z) \, ,
\end{equation}
is also self-adjoint. The norm estimate
\begin{equation} \label{eq:norm1ms}
\begin{split}
\Big\| \Big[ \tau_t^{(\Lambda)} \left(W (f)\right)  , W(g) \Big] \Big\|  \leq  & \Big\| \Big[ \tau_t^0 \left(W (f)\right) , W(g) \Big] \Big\| \\ & +  \sum_{X \subset \Lambda} \int_0^t  \Big\| \left[  \tau_s^{(\Lambda)} \left( \mathcal{L}_{t-s;X}(f) \right) , W(g) \right] \Big\| \, ds \, ,
\end{split}
\end{equation}
holds similarly. With (\ref{eq:deflzms}), it is easy to see that the integrand in (\ref{eq:norm1ms}) is bounded by
\begin{equation} 
c_a e^{v_a(t-s)} \sum_{x\in \Gamma} \, |f(x)| \, \sum_{x' \in X} F_a \left( d(x,x') \right) \int_{\mathbb{C}^X}  | z_{x'}|  \,  \, \Big\| \left[  \tau_s^{(\Lambda)} \left( W(z \cdot \delta_X) \right) , W(g) \right] \Big\| \, |\mu_X|(dz) \, ,
\end{equation}
the analogue of (\ref{eq:norm2}), for $0<a \leq a_0$. Moreover, if $0<a \leq \min(a_0,a_1)$, then 
\begin{equation}\label{eq:norm4}
\begin{split}
\Big\| \Big[ \tau_t^{(\Lambda)} \left(W (f)\right)  , W(g) \Big] \Big\| \leq \; &  c_a e^{v_a t} \sum_{x, y \in \Gamma} |f(x)| \, |g(y)| \, F_a \left( d(x,y) \right) \\ &+ c_a \sum_{x \in \Gamma} |f(x)| \sum_{X \subset \Lambda} \sum_{x' \in X} F_a \left( d(x,x') \right) \times \\
& \quad \times \int_0^t  e^{v_a (t-s)}  
\int_{\mathbb{C}^X} |z_{x'}| \, \Big\| \left[  \tau_s^{(\Lambda)} \left( W(z \cdot \delta_X) \right) , W(g) \right] \Big\| \, |\mu_X|(dz) \, ds \, .
\end{split}
\end{equation}
The estimate claimed in (\ref{eq:anharmbdms}) follows by iteration. In fact, the first term in the iteration is bounded by
\begin{equation}
\begin{split}
c_a \sum_x |f(x)| \sum_{X \subset \Lambda} \sum_{x_1 \in X} & F_a \left( d(x,x_1) \right) \int_0^t  e^{v_a (t-s)}  \\  
& \times \int_{\mathbb{C}^X} |z_{x_1}| \, \Big( c_a e^{v_a s} \sum_{x_2 \in X} \sum_y |z_{x_2}| \, |g(y)| \, F_a \left( d(x_2,y) \right) \Big) \, |\mu_X|(dz) \, ds \, \\
\leq  c_a  t \cdot c_a e^{v_a t}  \sum_{x,y} & |f(x)| |g(y)|  \sum_{x_1, x_2 \in \Gamma} F_a \left( d(x,x_1) \right)  F_a \left( d(x_2,y) \right) \sum_{\stackrel{X \subset \Gamma:}{x_1,x_2 \in X}} \int_{\mathbb{C}^X} | z_{x_1}| |z_{x_2}| | \mu_{X}|(dz) \, \\
\leq   \kappa_a c_a t  \cdot  c_a e^{v_a t} \sum_{x,y} & |f(x)| |g(y)|  \sum_{x_1, x_2 \in \Gamma} F_a \left( d(x,x_1) \right) F_{a} \left( d(x_1, x_2) \right)  F_a \left( d(x_2,y) \right) \\
\leq  \kappa_a C_{a}^2 c_{a} t  \cdot  c_a e^{v_a t} \sum_{x,y} & |f(x)| |g(y)| F_a \left( d(x,y) \right) \, .
\end{split}
\end{equation}
The higher order iterates are treated similarly.
\end{proof}

%
%
%

\section{Existence of the Dynamics}
In this section, we demonstrate that the finite volume dynamics analyzed in the previous section
converge to a limiting dynamics as the volume $\Lambda$ on which the perturbation is defined
tends to $\Gamma$. We state this as Theorem~\ref{thm:exist} below.

\begin{thm} \label{thm:exist}
Let $\tau_t^0$ be a harmonic dynamics defined on $\mathcal{W}( \ell^1( \Gamma) )$ as described in Section~\ref{subsec:onsite}.
Let $\{ \Lambda_n \}$ denote a non-decreasing, exhaustive sequence of finite subsets of $\Gamma$. 
Consider a family of perturbations $P^{\Lambda_n}$ as defined in (\ref{eq:defpl2}) and (\ref{eq:PX}) which satisfy
(\ref{eq:pertbd}). Suppose in addition that
\begin{equation} \label{eq:1mom}
M = \sup_{x \in \Gamma} \sum_{\stackrel{X \subset \Gamma:}{x \in X}} \int_{\mathbb{C}^X} |z_x| | \mu_X|(dz) < \infty \, .
\end{equation}
Then, for each $f \in \ell^1( \Gamma)$ and $t \in \mathbb{R}$ fixed, the limit
\begin{equation} \label{eq:lim}
\lim_{n \to \infty} \tau_t^{(\Lambda_n)} \left( W(f) \right) \, 
\end{equation}
exists in norm. The limiting dynamics, which we denote by $\tau_t$, is weakly continuous.
\end{thm}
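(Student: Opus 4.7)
The plan is to prove that $\{\tau_t^{(\Lambda_n)}(W(f))\}_n$ is Cauchy in norm, uniformly for $t$ in compact intervals, so that the limit exists and extends to a $*$-automorphism group on the norm closure of $\mathcal{W}(\ell^1(\Gamma))$. For $n \leq m$, comparing the Duhamel expansions provided by Proposition~\ref{prop:perdyn} applied to $\tau_t^{(\Lambda_m)}$ and $\tau_t^{(\Lambda_n)}$ yields
\[
\tau_t^{(\Lambda_m)}(W(f)) - \tau_t^{(\Lambda_n)}(W(f)) = i \int_0^t \tau_s^{(\Lambda_m)}\!\left( \left[ P^{\Lambda_m} - P^{\Lambda_n},\, \tau_{t-s}^{(\Lambda_n)}(W(f)) \right] \right) ds,
\]
and writing $P^{\Lambda_m} - P^{\Lambda_n} = \sum_X P_X$, the sum being over finite $X \subset \Lambda_m$ with $X \not\subset \Lambda_n$, the norm-preservation of $\tau_s^{(\Lambda_m)}$ lets us pull the norm inside the integral.

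For each such $X$, using $[P_X, \cdot] = \int_{\mathbb{C}^X}[W(z \cdot \delta_X), \cdot]\, \mu_X(dz)$ and applying the Lieb--Robinson bound of Theorem~\ref{thm:ahlrbms} to $[\tau_{t-s}^{(\Lambda_n)}(W(f)), W(z \cdot \delta_X)]$, the integrand is bounded, for $|t|\leq T$, by
\[
c_a\, e^{(v_a + c_a \kappa_a C_a^2)T} \sum_{x \in X,\, y' \in \Gamma} |z_x|\, |f(y')|\, F_a(d(x,y')).
\]
Integrating against $|\mu_X|(dz)$ and summing over admissible $X$ with an exchange of sums produces an overall bound of the form $C(T) \sum_{y'} |f(y')| \sum_x F_a(d(x,y'))\, W_x(n)$, where
\[
W_x(n) := \sum_{\substack{X \ni x \\ X \not\subset \Lambda_n}} \int_{\mathbb{C}^X} |z_x|\,|\mu_X|(dz).
\]

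The crux is showing this triple sum vanishes as $n \to \infty$. Hypothesis~(\ref{eq:1mom}) provides the uniform domination $W_x(n) \leq M$ for every $n$ and $x$, and since each finite $X \ni x$ eventually satisfies $X \subset \Lambda_n$ by exhaustiveness, dominated convergence over the countable sum in $X$ gives $W_x(n) \to 0$ pointwise. Two further dominated convergence applications---first in the $x$-sum with dominant $M F_a(d(x,y'))$, summable by~(\ref{eq:fint}), and then in the $y'$-sum with dominant $\|F_a\| M |f(y')|$, summable since $f \in \ell^1(\Gamma)$---complete the Cauchy argument. The limit $\tau_t(W(f)) := \lim_n \tau_t^{(\Lambda_n)}(W(f))$ extends by linearity and norm-isometry (using a standard three-$\epsilon$ argument with the uniform bound $\|\tau_t^{(\Lambda_n)}\| = 1$) to the norm-closed $*$-algebra generated by the Weyl operators, and the group and $*$-automorphism properties pass to the limit.

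For weak continuity of the dynamics lifted to $\mathcal{M} = \overline{\pi_\rho(\mathcal{W}(\mathcal{D}))}$, I reuse the three-term splitting from the end of Section~\ref{sec:bdints}: the outer two terms are controlled by the uniform norm convergence on compact intervals just established, while the middle term $\|(\tau_t^{(\Lambda_n)} - \tau_{t_0}^{(\Lambda_n)})(A)\psi\|$ vanishes as $t \to t_0$ by the weak continuity of each finite-volume $\tau_\cdot^{(\Lambda_n)}$, which is itself a $W^*$-dynamical system by Proposition~\ref{prop:perdyn}. I expect the main technical obstacle to be the passage from the pair-moment decay~(\ref{eq:pertbd}) used inside the Lieb--Robinson bound to the scalar first-moment hypothesis~(\ref{eq:1mom}) needed to drive $W_x(n) \to 0$; indeed, (\ref{eq:pertbd}) alone controls only $|z_{x_1}||z_{x_2}|$-moments and appears insufficient to force the tail convergence, which is precisely why~(\ref{eq:1mom}) is assumed separately in the hypothesis.
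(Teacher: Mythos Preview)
Your proposal is correct and follows essentially the same strategy as the paper: write the difference $\tau_t^{(\Lambda_m)}-\tau_t^{(\Lambda_n)}$ via the Duhamel identity from Proposition~\ref{prop:perdyn}, bound the resulting commutator by the Lieb--Robinson estimate of Theorem~\ref{thm:ahlrbms}, show the tail vanishes using the first-moment hypothesis~(\ref{eq:1mom}), and then obtain weak continuity by the $\epsilon/3$ splitting. The only noteworthy variation is in the tail step: the paper restricts the summation variable to $y\in\Lambda_n\setminus\Lambda_m$ and uses directly that $\sum_{y\notin\Lambda_m}F_a(d(x,y))\to 0$, whereas you keep the full sum over $x\in X$ and instead introduce $W_x(n)=\sum_{X\ni x,\,X\not\subset\Lambda_n}\int|z_x|\,|\mu_X|(dz)$, driving it to zero by three nested applications of dominated convergence. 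Your route is slightly softer (no explicit rate) but cleanly avoids having to argue why one may drop the $y\in X\cap\Lambda_m$ contributions; both lead to the same conclusion.
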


It is important to note that since the estimates in Theorem~\ref{thm:ahlrbms} are independent of $\Lambda$,
the limiting dynamics also satisfies a Lieb-Robinson bound as in (\ref{eq:anharmbdms}). We now
prove Theorem~\ref{thm:exist}.

\begin{proof}
Fix a Weyl operator $W(f)$ with $f \in \ell^1(\Gamma)$. 
Let $T>0$ and take $m \leq n$. Iteratively applying Proposition~\ref{prop:perdyn}, we have that
\begin{equation} \label{eq:dyn3}
\tau_t^{(\Lambda_n)}(W(f)) = \tau_t^{(\Lambda_m)}(W(f)) + i \int_0^t \tau_s^{(\Lambda_n)} \left( \left[ P^{\Lambda_n \setminus \Lambda_m}, \tau_{t-s}^{(\Lambda_m)}(W(f)) \right] \right) \, ds \, ,
\end{equation}
for all $-T \leq t \leq T$. The bound
\begin{eqnarray}
&&\left\| \left[ P^{\Lambda_n \setminus \Lambda_m}, \tau_{t-s}^{(\Lambda_m)}(W(f)) \right] \right\|\\
 & \leq & \sum_{\stackrel{X \subset \Lambda_n:}{ X \cap \Lambda_n \setminus \Lambda_m \neq \emptyset}} \int_{\mathbb{C}^X}
\left\| \left[ W(z \cdot \delta_X) , \tau_{t-s}^{(\Lambda_m)}(W(f)) \right] \right\| \, | \mu_X| ( dz) \nonumber \\
& \leq & c_a e^{(v_a +c_a \kappa_a C_a^2)(t-s)} \sum_{x \in \Gamma} |f(x)| \sum_{\stackrel{X \subset \Lambda_n:}{ X \cap \Lambda_n \setminus \Lambda_m \neq \emptyset}} \sum_{y \in X} F_a \left( d(x,y) \right) \int_{\mathbb{C}^X} |z_y| \, | \mu_X| (dz) \nonumber \\ & \leq & c_a e^{(v_a +c_a \kappa_a C_a^2)(t-s)} \sum_{x \in \Gamma} |f(x)| \sum_{y \in \Lambda_n \setminus \Lambda_m }   F_a \left( d(x,y) \right) \sum_{\stackrel{X \subset \Gamma:}{ y \in X}}   \int_{\mathbb{C}^X} |z_y| \, | \mu_X| (dz) \nonumber \\ & \leq & M c_a e^{(v_a +c_a \kappa_a C_a^2)(t-s)} \sum_{x \in \Gamma} |f(x)| \sum_{y \in \Lambda_n \setminus \Lambda_m }   F_a \left( d(x,y) \right) \nonumber
\end{eqnarray}
follows readily from Theorem~\ref{thm:ahlrbms} and assumption (\ref{eq:1mom}).
For $f \in \ell^1( \Gamma)$ and fixed $t$, the upper estimate above goes to zero as
$n,m \to \infty$. In fact, the convergence is uniform for $t \in [-T,T]$. 
This proves (\ref{eq:lim}).

By an $\epsilon/3$ argument, similar to what is done at the end of Section~\ref{sec:bdints}, weak continuity follows since we know it holds for the 
finite volume dynamics.
This completes the proof of Theorem~\ref{thm:exist}.
\end{proof}

%
%
%
%
%


\subsection*{Acknowledgment}

The work reported on in this paper was supported by the National Science
Foundation: B.N. under Grants \#DMS-0605342 and \#DMS-0757581,
R.S. under Grant \#DMS-0757424, and S.S. under Grant \#DMS-0757327
and \#DMS-0706927. The authors would also like to acknowledge the hospitality 
of the department of mathematics at U.C. Davis where a part of this work was
completed.
\end{document}